\documentclass[11pt]{article}
%%%%%%%%%%%%%%%%%%%%%%%%%%%%%%%%%%%%%%%%%%%%%%%%%%%%%%%%%%%%%%%%%%%%%
\usepackage{amssymb,amsmath}

\textwidth21cm
\textheight28.7cm
\oddsidemargin2.7cm
\evensidemargin\oddsidemargin
\topmargin2cm
\addtolength{\textwidth}{-\oddsidemargin}
\addtolength{\textwidth}{-\evensidemargin}
\addtolength{\textheight}{-2\topmargin}
\addtolength{\textheight}{-\headheight}
\addtolength{\textheight}{-\headsep}
\addtolength{\textheight}{-\footskip}
\hoffset-1in
\voffset-1in
\newtheorem{theorem}{Theorem}
\newtheorem{lemma}{Lemma}
\newtheorem{proposition}{Proposition}
\newtheorem{definition}{Definition}
\newtheorem{corollary}[theorem]{Corollary}
\newtheorem{remark}{Remark}
\newenvironment{proof}[1][Proof]{\textbf{#1.} }{\ \hfill $\Box$ \vspace{0.5cm}}

\begin{document}

\begin{center}
\vspace{1cm}

{\Large{\bf{Canonical transformations for fermions\\[5pt]
in superanalysis}}}

\vspace{0.5cm}

{Joachim Kupsch}

\vspace{0.2cm}

Fachbereich Physik, TU Kaiserslautern,\\[0pt]
D-67653 Kaiserslautern, Germany \\
e-mail: kupsch@physik.uni-kl.de
\end{center}

\bigskip

\begin{abstract}
Canonical transformations (Bogoliubov
transformations) for fermions with an infinite number of degrees of freedom
are studied within a calculus of superanalysis. A continuous representation
of the orthogonal group is constructed on a Grassmann module extension of
the Fock space. The pull-back of these operators to the Fock space yields a
unitary ray representation of the group that implements the Bogoliubov
transformations.
\end{abstract}

\section{Introduction}

Canonical transformations for fermions have been introduced by Bogoliubov
\cite{Bog:1958, BTS:1958} and by Valatin \cite{Valatin:1958, Valatin:1961}
to diagonalize Hamiltonians of the theory of superconductivity. Canonical
transformations for systems with an infinite number of degrees of freedom
have been studied for bosons and for fermions in the book of Friedrichs \cite%
{Friedrichs:1953}. Mathematically minded investigations for fermionic
systems are often based on a study of the Clifford algebra of the field
operators, cf. e.g. \cite{Shale/Stinespring:1965, Araki:1968, Araki:1970,
BMV:1968, BSZ:1992}. The group theoretical structure of the canonical
transformations for fermions is that of the (infinite dimensional)
orthogonal group, which acts on the real Hilbert space that underlies the
complex one particle Hilbert space. An alternative approach to canonical
transformations is therefore the construction of a unitary representation of
this orthogonal group. After partial solutions e.g. in the books \cite%
{Friedrichs:1953, Berezin:1966} a complete construction has been given by
Ruijsenaars \cite{Ruijsenaars:1977, Ruijsenaars:1978} with rigorous normal
ordering expansions.

In this paper we present fermionic canonical transformations with the
methods of infinite dimensional superanalysis as presented in Ref. \cite%
{Kupsch/Smolyanov:1998}. This approach of superanalysis uses Grassmann
modules with a Hilbert norm in contrast to the standard literature, which
either concentrates on the algebraic structure of finite dimensional
superanalysis and supermanifolds \cite{Berezin:1987, DeWitt:1992}, or --
following \cite{Rogers:1980} -- uses a Banach norm for the superalgebra, cf.
e. g. \cite{JP:1981, Khrennikov:1999, Rogers:2007}. The aim of the paper is
twofold. In the first part we recapitulate and amend the genuine infinite
dimensional superanalysis of Ref. \cite{Kupsch/Smolyanov:1998}. The main
tool is the Grassmann module extension of the fermionic Fock space. In the
second part of the paper a representation of the orthogonal group is
constructed on the linear span of fermionic coherent vectors, which exist in
the module Fock space. Then the pull-back of the module operators to the
physical Fock space leads to a unitary ray representation of the orthogonal
group.

The plan of the paper is as follows. In Sec. \ref{Fock} first some facts
about Hilbert and Fock spaces are recapitulated. Then an essentially
self-contained presentation of superanalysis in infinite dimensional spaces
follows. Superanalysis allows to define coherent vectors and Weyl operators
also for fermions. Weyl operators and their interplay with canonical
transformations on the fermionic Fock space are discussed in Sec. \ref{Weyl}%
. The properties of the infinite dimensional orthogonal group are reviewed
in Sec. \ref{orthog}. The construction of the representation of the
orthogonal group on the module Fock space and the pull-back to the physical
Fock space is given in Sec. \ref{rep}. The subclass of orthogonal
transformations, for which the transformed vacuum has still an overlap with
the old vacuum, is investigated in Sec. \ref{invertible}. The representation
of these transformations is given with the methods of superanalysis on the
linear span of coherent vectors. These calculations are the fermionic
counterpart to the representation of the bosonic canonical transformations
in Ref. \cite{Kupsch/Banerjee:2006}. The representation of the full
orthogonal group follows in Sec. \ref{nonInv}. The orbit of the vacuum
generated by all canonical transformations is given in Sec. \ref{vacuum}.
Some proofs and detailed calculations are postponed to the Appendices \ref%
{calc} and \ref{calcRep}.

\section{Fock spaces and superanalysis\label{Fock}}

\subsection{The Fock space of antisymmetric tensors\label{hilbert}}

In this Section we recapitulate some basic statements about Hilbert spaces
and Fock spaces of antisymmetric tensors. Let $\mathcal{H}$ be a complex
separable Hilbert space with inner product $\left( f\mid g\right) $ and with
an antiunitary involution $f\rightarrow
f^{\ast},\,f^{\ast\ast}\equiv(f^{\ast })^{\ast}=f$. Then $\left\langle
f\parallel g\right\rangle :=\left( f^{\ast }\mid g\right) \in\mathbb{C}$ is
a symmetric $\mathbb{C}$-bilinear form $\left\langle f\parallel
g\right\rangle =\left\langle g\parallel f\right\rangle ,\,f,g\in\mathcal{H}$%
. The underlying real Hilbert space of $\mathcal{H}$ is denoted as $\mathcal{%
H}_{\mathbb{R}}$. This space has the inner product $\left( f\mid g\right) _{%
\mathbb{R}}=\mathrm{Re}\,\left( f\mid g\right) $.

We use the following notations for linear operators. The space of all
bounded operators $A$ with operator norm $\left\Vert A\right\Vert $ is $%
\mathcal{L}(\mathcal{H})$. The adjoint operator is denoted by $A^{\dagger}$.
The complex conjugate operator $\bar{A}$ and the transposed operator $A^{T} $
are defined by the identities $\bar{A}f=\left( Af^{\ast}\right) ^{\ast}$ and
$A^{T}f=\left( A^{\dagger}f^{\ast}\right) ^{\ast}$ for all $f\in\mathcal{H}$%
. The usual relations $A^{\dagger}=\left( \bar{A}\right) ^{T}=\overline{%
\left( A^{T}\right) }$ are valid. An operator $A\in \mathcal{L}(\mathcal{H})$%
, which has the property $A^{T}=\pm A$, satisfies the symmetry relation $%
\left\langle f\parallel Ag\right\rangle =\pm\left\langle Af\parallel
g\right\rangle $ for all $f,g\in\mathcal{H}$. It is called
transposition-symmetric or skew symmetric, respectively. The space of all
Hilbert-Schmidt (HS) operators $A$ with norm $\left\Vert A\right\Vert _{2}=%
\sqrt{\mathrm{tr}_{\mathcal{H}}A^{\dagger}A}$ is $\mathcal{L}_{2}(\mathcal{H}%
)$, the space of all trace class or nuclear operators $A$ with norm $%
\left\Vert A\right\Vert _{1}=\mathrm{tr}_{\mathcal{H}}\sqrt{A^{\dagger }A}$
is $\mathcal{L}_{1}(\mathcal{H})$. The HS operators with the property $%
A^{T}=\pm A$ form a closed subspace within $\mathcal{L}_{2}(\mathcal{H})$
for which the notation $\mathcal{L}_{2}^{\pm}(\mathcal{H})$ is used. The
space of all unitary operators in $\mathcal{L}(\mathcal{H})$ is called $%
\mathcal{U}(\mathcal{H}).$ Projection operator always means an orthogonal
projection.

The antisymmetric tensor product or exterior product is written with the
symbol $\wedge $ . The linear span of all tensors $f_{1}\wedge \cdot \cdot
\cdot \wedge f_{n},\,f_{j}\in \mathcal{H},\,j=1,...,n$, is denoted as $%
\mathcal{H}^{\wedge n}$. The space $\mathcal{H}^{\wedge 0}$ is the one
dimensional space $\mathbb{C}$. The Hilbert norm of the space $\mathcal{H}%
^{\wedge n}$ is written as $\left\Vert .\right\Vert _{n}$ and the exterior
product of the vectors $f_{j}\in \mathcal{H},\,j=1,...,n$, is normalized to $%
\left\Vert f_{1}\wedge \cdot \cdot \cdot \wedge f_{n}\right\Vert
_{n}^{2}=\det \left( f_{i}\mid f_{j}\right) $. The completion of the space $%
\mathcal{H}^{\wedge n}$ with the norm $\left\Vert .\right\Vert _{n}$ is the
Hilbert space $\mathcal{A}_{n}(\mathcal{H})$. The exterior product extends
by linearity to the linear space of tensors of finite degree $\mathcal{A}%
_{fin}(\mathcal{H})=\cup _{N=0}^{\infty }\oplus _{n=0}^{N}\mathcal{A}_{n}(%
\mathcal{H})$. This space is an (infinite dimensional) Grassmann algebra.
The unit is the normalized basis vector $1_{vac}$ (vacuum) of the space $%
\mathcal{H}^{\wedge 0}=\mathbb{C}$. If $\mathcal{F}\subset \mathcal{H}$ is a
closed subspace of $\mathcal{H}$ then $\mathcal{A}_{n}(\mathcal{F})$ is the
completed linear span of all tensors $f_{1}\wedge \cdot \cdot \cdot \wedge
f_{n},\,f_{j}\in \mathcal{F}$, and $\mathcal{A}_{fin}(\mathcal{F}):=\cup
_{N=0}^{\infty }\oplus _{n=0}^{N}\mathcal{A}_{n}(\mathcal{F})$ is a
subalgebra of $\mathcal{A}_{fin}(\mathcal{H})$.

Any element $F\in\mathcal{A}_{fin}(\mathcal{H})$ can be decomposed as $%
F=\sum_{n=0}^{\infty}F_{\,n},\,F_{n}\in\mathcal{A}_{n}(\mathcal{H})$, and it
is given the norm
\begin{equation}
\left\Vert F\right\Vert ^{2}=\sum_{n=0}^{\infty}\left\Vert F_{n}\right\Vert
_{n}^{2}.  \label{h3}
\end{equation}
The completion of $\mathcal{A}_{fin}(\mathcal{H})$ with this norm is the
standard Fock space of antisymmetric tensors $\mathcal{A}(\mathcal{H}%
)=\oplus_{n=0}^{\infty}\mathcal{A}_{n}(\mathcal{H}).$ The inner product of
two elements $F,G\in$ $\mathcal{A}(\mathcal{H})$ is written as $\left( F\mid
G\right) $. The antiunitary involution $f\rightarrow f^{\ast}$ on $\mathcal{H%
}$ can be extended uniquely to an antiunitary involution $F\rightarrow
F^{\ast}$on $\mathcal{A}(\mathcal{H})$ with the rule $\left( F\wedge
G\right) ^{\ast}=G^{\ast}\wedge F^{\ast}$. Then $\left\langle F\parallel
G\right\rangle :=\left( F^{\ast}\mid G\right) $ is a $\mathbb{C}$-bilinear
symmetric form on $\mathcal{A}(\mathcal{H})$. Any bounded operator $B\in%
\mathcal{L}(\mathcal{H})$ can be lifted to an operator $\Gamma(B)$ on $%
\mathcal{A}(\mathcal{H})$ by the rules $\Gamma(B)1_{vac}=1_{vac}$ and $%
\Gamma(B)\left( f_{1}\wedge\cdot\cdot\cdot\wedge f_{n}\right)
:=Bf_{1}\wedge\cdot\cdot\cdot\wedge Bf_{n},~f_{j}\in\mathcal{H},\,n\in
\mathbb{N}$. The operator $\Gamma(B)$ is a contraction, if $B$ is a
contraction, and it is isometric/unitary, if $B$ is isometric/unitary.

For products of vectors we use the following notation. Let $\mathbf{A\subset
}\mathbb{N}$ be a finite subset of the natural numbers with cardinality $%
\left\vert \mathbf{A}\right\vert =n\geq1$. Then the elements of $\mathbf{A}$
can be ordered $\mathbf{A}=\left\{ a_{1}<...<a_{n}\right\} $. Given the
vectors $f_{a},\,a\in\mathbf{A}$ the tensor $f_{\mathbf{A}}$ is defined as
the product $f_{a_{1}}\wedge...\wedge f_{a_{n}}$ with ordered indices. For
the empty set $\mathbf{A}=\emptyset$ we define $f_{\emptyset}=1_{vac}$. If $%
f_{b},\,b\in\mathbf{B}$, is another family of vectors, indexed by the finite
set $\mathbf{B\subset}\mathbb{N},\,\mathbf{A}\cap\mathbf{B}=\emptyset$, then
the exterior product of these vectors is $f_{\mathbf{A}}\wedge f_{\mathbf{B}%
}=(-1)^{\tau(\mathbf{A},\mathbf{B})}f_{\mathbf{A}\cup\mathbf{B}}$. The
exponent $\tau(\mathbf{A},\mathbf{B}):=\#\left\{ (a,b)\in\mathbf{A}\times%
\mathbf{B}\mid a>b\right\} $ counts the number of inversions. In the same
notation we write $z_{\mathbf{A}}$ for the product $z_{a_{1}}z_{a_{2}}\cdot%
\cdot\cdot z_{a_{n}}$ of the complex numbers $z_{a},\,a\in\mathbf{A}$, with $%
z_{\emptyset}=1$. The symbol $\sum_{\mathbf{A}\subset\mathbb{N}}$ always
means summation over the power set $\mathcal{P}(\mathbb{N})$ of $\mathbb{N}$%
, i.e. over all \textit{finite} subsets of $\mathbb{N}$ including the empty
set.

\subsection{Superanalysis and coherent vectors\label{super}}

Superanalysis allows to define coherent vectors and Weyl operators also for
fermions. An approach of superanalysis for spaces with infinite dimensions
has been developed in \cite{Kupsch/Smolyanov:1998}. In this Section we
present a slightly modified and extended version of superanalysis, which is
used in the subsequent calculations. Some proofs are given in the Appendix %
\ref{calc} or in Ref. \cite{Kupsch/Smolyanov:1998}. For the superalgebra $%
\Lambda $ we choose a Grassmann algebra $\Lambda $ which differs by
notations and a weaker Hilbert norm from the Fock space $\mathcal{A}(%
\mathcal{H})$. The Grassmann algebra is the direct sum $\Lambda =\oplus
_{p\geq 0}\Lambda _{p}$ of the subspaces $\Lambda _{p}$ of tensors of degree
$p$. The generating space is the infinite dimensional Hilbert space $\Lambda
_{1}$. The space $\Lambda _{p}$ is given the Hilbert norm $\left\Vert
.\right\Vert _{p}$ of antisymmetric tensors of degree $p$ normalized to the
determinant (as for $\mathcal{A}_{p}(\mathcal{H})$). The unit is denoted by $%
\kappa _{0}$, and it has the norm $\left\Vert \kappa _{0}\right\Vert _{0}=1$%
. The topology of $\Lambda $ is then defined by the Hilbert norm%
\begin{equation}
\left\Vert \lambda \right\Vert _{\Lambda }^{2}=\sum_{p=0}^{\infty
}(p!)^{-2}\left\Vert \lambda _{p}\right\Vert _{p}^{2}  \label{h4}
\end{equation}%
if $\lambda =\sum_{p=0}^{\infty }\lambda _{p},\,\lambda _{p}\in \Lambda _{p}$%
. The antisymmetric tensor product of two elements $\lambda _{1}$ and $%
\lambda _{2}$ of $\Lambda $ is now denoted as Grassmann product and it is
written as $\lambda _{1}\lambda _{2}$. As consequence of the topology (\ref%
{h4}) this product is continuous with the estimate $\left\Vert \lambda
_{1}\lambda _{2}\right\Vert _{\Lambda }\leq \sqrt{3}\left\Vert \lambda
_{1}\right\Vert _{\Lambda }\left\Vert \lambda _{2}\right\Vert _{\Lambda }$,
cf. \cite{Kupsch/Smolyanov:1998} Appendix A or \cite{Kupsch/Smolyanov:2000}.
If $\lambda _{1}\in \Lambda _{1}$ and $\lambda _{2}\in \Lambda $ the
stronger estimate $\left\Vert \lambda _{1}\lambda _{2}\right\Vert _{\Lambda
}\leq \left\Vert \lambda _{1}\right\Vert _{\Lambda }\left\Vert \lambda
_{2}\right\Vert _{\Lambda }$ follows as for the standard Fock space. This
Grassmann algebra is a superalgebra $\Lambda =\Lambda _{\bar{0}}\oplus
\Lambda _{\bar{1}}$ with the even part $\Lambda _{\bar{0}}$ and the odd part
$\Lambda _{\bar{1}}$ consisting of tensors of even or odd degree,
respectively. As additional structure we introduce an antiunitary involution
$\kappa \rightarrow \kappa ^{\ast }$ in the generating space $\Lambda _{1}$.
This involution is extended to an antiunitary involution $\lambda
\rightarrow \lambda ^{\ast }\ $on $\Lambda $ by the usual rules.

\begin{remark}
The Grassmann algebra $\Lambda$ is needed for a correct bookkeeping of the
fermionic degrees of freedom. There is no canonical way for such
constructions, and instead of a Grassmann algebra one can take a more
general superalgebra $\Lambda$. In the literature about superanalysis the
topology of $\Lambda$ is either not discussed, or -- following \cite%
{Rogers:1980} -- the Grassmann algebra is equipped with a Banach space
topology, cf. e.g. \cite{JP:1981, Khrennikov:1999, Rogers:2007}. But a
Hilbert space topology is easier to handle, and the pull-back to the Fock
space of standard physics is more transparent.
\end{remark}

The Hilbert space $\mathcal{H}$, the algebra of antisymmetric tensors $%
\mathcal{A}_{fin}(\mathcal{H})$ and the Fock space $\mathcal{A}(\mathcal{H})$
can be extended to the $\Lambda $-modules $\mathcal{H}^{\Lambda }=\Lambda
\widehat{\otimes }\mathcal{H},\,\mathcal{A}_{fin}^{\Lambda }(\mathcal{H}%
)=\cup _{N=0}^{\infty }\oplus _{n=0}^{N}\Lambda \widehat{\otimes }\mathcal{A}%
_{n}(\mathcal{H})$ and $\mathcal{A}^{\Lambda }(\mathcal{H})=\Lambda \widehat{%
\otimes }\mathcal{A}(\mathcal{H})$.\footnote{%
The tensor product $\mathcal{H}_{1}\otimes \mathcal{H}_{2}$ of two Hilbert
spaces means the algebraic tensor space. The completion of $\mathcal{H}%
_{1}\otimes \mathcal{H}_{2}$ with the Hilbert cross norm is $\mathcal{H}_{1}%
\widehat{\otimes }\mathcal{H}_{2}$.} The Hilbert norm of the module Fock
space $\mathcal{A}^{\Lambda }(\mathcal{H})$ is denoted as $\left\Vert
.\right\Vert _{\otimes }$. The $\Lambda $-linear space $\mathcal{A}%
_{fin}^{\Lambda }(\mathcal{H})$ is a pre-Hilbert space with completion $%
\mathcal{A}^{\Lambda }(\mathcal{H})$. The tensor product of $\mathcal{A}%
_{fin}(\mathcal{H})$ has a $\Lambda $-linear extension $\mathcal{A}%
_{fin}^{\Lambda }(\mathcal{H})\ni \Xi _{1},\Xi _{2}\rightarrow \Xi _{1}\circ
\Xi _{2}\in \mathcal{A}_{fin}^{\Lambda }(\mathcal{H})$. This product is
uniquely determined by the rule $\Xi _{1}\circ \Xi _{2}=\lambda _{1}\lambda
_{2}\otimes (F_{1}\wedge F_{2})$ if $\Xi _{j}=\lambda _{j}\otimes F_{j}\in
\Lambda \otimes \mathcal{A}_{fin}(\mathcal{H)}\,,$ $j=1,2$. Any tensor $\Xi $
can be written as series $\Xi =\sum_{p=0,n=0}^{\infty }\Xi _{p,n}$ with $\Xi
_{p,n}\in \Lambda _{p}\widehat{\otimes }\mathcal{A}_{n}(\mathcal{H}).$ If
the set $\left\{ p+n\mid \Xi _{p,n}\neq 0\right\} $ contains only even (odd)
numbers, we say the tensor $\Xi $ has even (odd) parity, $\pi (\Xi
)=0\,(\,1\,)$. With this parity the module $\mathcal{A}_{fin}^{\Lambda }(%
\mathcal{H})$ is a $\mathbb{Z}_{2}$-graded algebra. If the tensors $\Theta $
and $\Xi $ have the parities $\pi (\Theta )=p$ and $\pi (\Xi )=q$, the
product satisfies $\Theta \circ \Xi =(-1)^{pq}\,\Xi \circ \Theta $. The
product $\Theta \circ \Xi $ is not defined on the full the module Fock space
$\mathcal{A}^{\Lambda }(\mathcal{H})$, but it can be extended to a larger
class than $\mathcal{A}_{fin}^{\Lambda }(\mathcal{H)}$ by continuity
arguments, cf. Appendix \ref{prod} and Lemma \ref{product2}.

The inner product of $\mathcal{A}(\mathcal{H})$ has a unique module
extension to $\mathcal{A}^{\Lambda }(\mathcal{H})$ with the property $\left(
\lambda \otimes F\mid \mu \otimes G\right) =\lambda ^{\ast }\mu \left( F\mid
G\right) \in \Lambda $ where $\lambda ,\mu \in \Lambda $ and $F,G\in
\mathcal{A}(\mathcal{H})$. This $\Lambda $-valued inner product $\mathcal{A}%
^{\Lambda }(\mathcal{H})\times \mathcal{A}^{\Lambda }(\mathcal{H})\ni \left(
\Theta ,\Xi \right) \rightarrow \left( \Theta \mid \Xi \right) \in \Lambda $
has the hermiticity property $\left( \Theta \mid \Xi \right) ^{\ast }=\left(
\Xi \mid \Theta \right) $, and it satisfies the norm estimate $\left\Vert
\left( \Theta \mid \Xi \right) \right\Vert _{\Lambda }\leq \sqrt{3}%
\left\Vert \Theta \right\Vert _{\otimes }\left\Vert \Xi \right\Vert
_{\otimes }$, cf. \cite{Kupsch/Smolyanov:1998}. If the tensors $\Theta $ and
$\Xi $ have the same parity, $\pi (\Theta )=\pi (\Xi )$, the form $\,\left(
\Theta \mid \Xi \right) \in \Lambda _{\bar{0}}$ is an even element of $%
\Lambda $. The involution of $\mathcal{A}(\mathcal{H})$ can be extended to
an involution $\Xi \rightarrow \Xi ^{\ast }$ on $\mathcal{A}^{\Lambda }(%
\mathcal{H})$ with the property $\left( \lambda \otimes F\right) ^{\ast
}=\lambda ^{\ast }\otimes F^{\ast }$, and the bilinear form $\left\langle
.\parallel .\right\rangle $ of $\mathcal{A}(\mathcal{H})$ has a unique $%
\Lambda $-extension $\mathcal{A}^{\Lambda }(\mathcal{H})\times \mathcal{A}%
^{\Lambda }(\mathcal{H})\ni \left( \Theta ,\Xi \right) \rightarrow
\left\langle \Theta \parallel \Xi \right\rangle :=\left( \Theta ^{\ast }\mid
\Xi \right) \in \Lambda $.

The factorization property of the inner product of $\mathcal{A}(\mathcal{H})$
for tensors in orthogonal subspaces leads to the following factorization of $%
\left( \Theta \mid \Xi \right) $:

\begin{lemma}
Let $\mathcal{H}_{1}$ and $\mathcal{H}_{1}$ be two orthogonal closed
subspaces of the Hilbert space $\mathcal{H}$. The tensors $\Theta _{j},\,\Xi
_{j}\in \mathcal{A}^{\Lambda }(\mathcal{H})\,,\,j=1,2$, are restricted by :%
\newline
(i) The tensors are elements of the orthogonal subspaces, $\Theta _{1},\,\Xi
_{1}\in \mathcal{A}^{\Lambda }(\mathcal{H}_{1})$ and

$\Theta _{2},\,\Xi _{2}\in \mathcal{A}^{\Lambda }(\mathcal{H}_{2})$. \newline
(ii) The tensors $\Theta _{1}$ and $\Xi _{1}$ have equal parity.\newline
Then the products $\Theta _{1}\circ \Theta _{2}$ and $\Xi _{1}\circ \Xi _{2}$
are defined and the factorization $\left( \Theta _{1}\circ \Theta _{2}\mid
\Xi _{1}\circ \Xi _{2}\right) =\left( \Theta _{1}\mid \Xi _{1}\right) \left(
\Theta _{2}\mid \Xi _{2}\right) $ is true.
\end{lemma}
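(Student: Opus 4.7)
The natural strategy is to check the identity first on elementary tensors $\Theta_j=\lambda_j\otimes F_j$ and $\Xi_j=\mu_j\otimes G_j$ with $\lambda_j,\mu_j\in\Lambda$ and $F_j,G_j\in\mathcal{A}(\mathcal{H}_j)$, and then extend by $\Lambda$-sesquilinearity and continuity, using the extension of the product $\circ$ given in Lemma \ref{product2} so that $\Theta_1\circ\Theta_2$ and $\Xi_1\circ\Xi_2$ are well defined under (i). On elementary tensors the defining rule for $\circ$ gives
\[
\Theta_1\circ\Theta_2=(\lambda_1\lambda_2)\otimes(F_1\wedge F_2),\qquad \Xi_1\circ\Xi_2=(\mu_1\mu_2)\otimes(G_1\wedge G_2),
\]
and hence $(\Theta_1\circ\Theta_2\mid \Xi_1\circ\Xi_2)=\lambda_2^{\ast}\lambda_1^{\ast}\mu_1\mu_2\cdot(F_1\wedge F_2\mid G_1\wedge G_2)$ by the module extension of the inner product.

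Next I would use the factorization of the antisymmetric Fock inner product across orthogonal subspaces: because $\mathcal{H}_1\perp\mathcal{H}_2$, the determinantal normalization $\|f_1\wedge\cdots\wedge f_n\|_n^2=\det(f_i\mid f_j)$ forces $(F_1\wedge F_2\mid G_1\wedge G_2)=(F_1\mid G_1)(F_2\mid G_2)$ with no sign correction. Since the target right-hand side is $(\Theta_1\mid\Xi_1)(\Theta_2\mid\Xi_2)=\lambda_1^{\ast}\mu_1(F_1\mid G_1)\cdot\lambda_2^{\ast}\mu_2(F_2\mid G_2)$, what remains is a purely $\Lambda$-algebraic matching: to move $\lambda_2^{\ast}$ past the block $\lambda_1^{\ast}\mu_1(F_1\mid G_1)$.

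Here hypothesis (ii) enters decisively. The excerpt has already recorded that equal parity $\pi(\Theta_1)=\pi(\Xi_1)$ forces $(\Theta_1\mid\Xi_1)=\lambda_1^{\ast}\mu_1(F_1\mid G_1)\in\Lambda_{\bar 0}$, and even elements commute with every element of $\Lambda$. Therefore $\lambda_2^{\ast}\lambda_1^{\ast}\mu_1\mu_2(F_1\mid G_1)(F_2\mid G_2)=(\Theta_1\mid\Xi_1)(\Theta_2\mid\Xi_2)$, which is the claim on elementary tensors. Finite $\Lambda$-sesquilinear combinations upgrade this to $\mathcal{A}_{fin}^{\Lambda}(\mathcal{H}_j)$, and the norm bound $\|(\Theta\mid\Xi)\|_\Lambda\le\sqrt{3}\|\Theta\|_\otimes\|\Xi\|_\otimes$ combined with the continuity of $\circ$ from Lemma \ref{product2} delivers the result for all admissible $\Theta_j,\Xi_j$.

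The only genuinely delicate point is this parity bookkeeping: without (ii) the odd part of $\lambda_1^{\ast}\mu_1$ would anticommute with the odd part of $\lambda_2^{\ast}$ and produce an extra sign that ruins the factorization. All remaining ingredients --- the Fock-space factorization over orthogonal subspaces, the $\Lambda$-sesquilinear extension, and the passage to the norm closure --- are routine consequences of the definitions and the continuity estimates recalled in Section \ref{super}.
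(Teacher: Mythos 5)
Your argument is correct and is essentially the argument the paper intends: the paper states this lemma without a written proof, attributing it in one sentence to the factorization of the Fock inner product over orthogonal subspaces, and your computation on elementary tensors $\lambda_j\otimes F_j$, $\mu_j\otimes G_j$ together with the parity observation (either $(F_1\mid G_1)=0$ or $\deg\lambda_1+\deg\mu_1$ is even, so $\lambda_1^{\ast}\mu_1$ commutes with $\lambda_2^{\ast}$) is exactly the missing verification. One caveat: the well-definedness of $\Theta_1\circ\Theta_2$ for general tensors in $\mathcal{A}^{\Lambda}(\mathcal{H}_1)\times\mathcal{A}^{\Lambda}(\mathcal{H}_2)$ does not follow from Lemma \ref{product2}, which only treats multiplication by supervectors $\xi\in\Lambda_1\widehat{\otimes}\mathcal{H}$; the correct justification here is that orthogonality of the subspaces makes the Fock part exactly multiplicative, $\left\Vert F_1\wedge F_2\right\Vert=\left\Vert F_1\right\Vert\,\left\Vert F_2\right\Vert$, so the continuity of $\circ$ on $\mathcal{A}^{\Lambda}(\mathcal{H}_1)\times\mathcal{A}^{\Lambda}(\mathcal{H}_2)$ reduces to the continuity estimate $\left\Vert \lambda_1\lambda_2\right\Vert_{\Lambda}\leq\sqrt{3}\left\Vert\lambda_1\right\Vert_{\Lambda}\left\Vert\lambda_2\right\Vert_{\Lambda}$ of the Grassmann product, after which your density argument goes through unchanged.
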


If $\mu\in\Lambda$ and $T\in\mathcal{L}\left( \mathcal{A}(\mathcal{H}%
)\right) $ then there is a unique bounded operator $\mu\otimes T$ on $%
\mathcal{A}^{\Lambda}(\mathcal{H})$ that maps $\lambda\otimes F\in
\Lambda\otimes\mathcal{A}(\mathcal{H})$ onto $\mu\lambda\otimes T\,F\in
\Lambda\otimes\mathcal{A}(\mathcal{H})$.

\begin{definition}
\label{regular} We call a continuous $\mathbb{C}$-linear operator $\hat{T}$
on $\mathcal{A}^{\Lambda}(\mathcal{H})=\Lambda\widehat{\otimes}\mathcal{A}(%
\mathcal{H})$ a \textit{regular operator}, if it is can be represented as $%
\hat{T}=\sum_{j\in\mathbf{J}}\mu_{j}\otimes T_{j}$ with $\mu_{j}\in
\Lambda,\,T_{j}\in\mathcal{L}\left( \mathcal{A}(\mathcal{H})\right) $ and a
finite or countable index set $\mathbf{J}\subset\mathbb{N}$. The series $%
\sum_{j\in\mathbf{J}}\left\Vert \mu_{j}\right\Vert _{\Lambda}\left\Vert
T_{j}\right\Vert $ has to converge.
\end{definition}

\noindent The operator norm of $\hat{T}=\sum_{j\in\mathbf{J}}\mu_{j}\otimes
T_{j}$ has the upper bound $\left\Vert \hat{T}\right\Vert \leq\sqrt{3}%
\sum_{j\in\mathbf{J}}\left\Vert \mu_{j}\right\Vert _{\Lambda}\left\Vert
T_{j}\right\Vert $. If $\hat{T}_{1}$ and $\hat{T}_{2}$ are regular
operators, then the product $\hat{T}_{1}\hat{T}_{2}$ is a regular operator.
Let $\hat{T}$ be a bounded operator on $\mathcal{A}^{\Lambda}(\mathcal{H})$,
then the operator $\hat{T}^{+}$ is called the superadjoint operator of $\hat{%
T}$, if the identity $\left( \Theta\mid\hat{T}\Xi\right) =\left( \hat{T}%
^{+}\Theta\mid\Xi\right) $ is valid for all $\Theta,\Xi\in\mathcal{A}%
^{\Lambda }(\mathcal{H})$. For a regular operator $\hat{T}%
=\sum_{j}\lambda_{j}\otimes T_{j}$ the superadjoint operator is the regular
operator $\hat{T}^{+}=\sum _{j}\lambda_{j}^{\ast}\otimes T_{j}^{\dagger}$.
In general the superadjoint operator is not the adjoint operator in the
standard definition using the $\mathbb{C}$-valued inner product of the
Hilbert space $\mathcal{A}^{\Lambda }(\mathcal{H})$.

For the subsequent constructions we define the fermionic \textit{superspace }%
as the completed tensor space $\mathcal{H}_{\Lambda }:=\Lambda _{1}\widehat{%
\otimes }\mathcal{H}$. The algebraic superspace $\mathcal{H}_{\Lambda
}^{_{alg}}:=\Lambda _{1}\otimes \mathcal{H}$ is dense in $\mathcal{H}%
_{\Lambda }$. The inclusions $\mathcal{H}_{\Lambda }^{_{alg}}\subset
\mathcal{H}_{\Lambda }\subset \mathcal{H}^{\Lambda }\subset \mathcal{A}%
^{\Lambda }(\mathcal{H})$ are obvious. The elements of $\mathcal{H}_{\Lambda
}$ have even parity, and for $\xi ,\eta \in \mathcal{H}_{\Lambda }$ the $%
\Lambda $-extended inner product has the hermiticity properties $\left( \xi
\mid \eta \right) ^{\ast }=\left( \eta \mid \xi \right) =-\left( \xi ^{\ast
}\mid \eta ^{\ast }\right) $.

\begin{remark}
\label{Smol}In Ref. \cite{Kupsch/Smolyanov:1998} the fermionic part of the
superspace is introduced as $\Lambda_{\bar{1}}\widehat{\otimes}\mathcal{H}$
with the full odd subspace $\Lambda_{\bar{1}}$ of the Grassmann algebra $%
\Lambda$, and the coherent vectors $\exp\,\xi$ are defined with arguments $%
\xi\in\Lambda_{\bar{1}}\widehat{\otimes}\mathcal{H}$. Here we use the
strictly smaller superspace $\mathcal{H}_{\Lambda}=\Lambda_{1}\widehat{%
\otimes }\mathcal{H}$ with the generating Hilbert space $\Lambda_{1}$ of $%
\Lambda$. In \cite{Kupsch/Smolyanov:1998} this space was called \textit{%
restricted superspace}. Calculations with $\Lambda_{1}\widehat{\otimes}%
\mathcal{H}$ allow better norm estimates.
\end{remark}

In choosing the superspace $\mathcal{H}_{\Lambda}=\Lambda_{1}\widehat{%
\otimes }\mathcal{H}$ the multiplication with supervectors $\xi\in\mathcal{H}%
_{\Lambda}$ has the following property:

\begin{lemma}
\label{product2}Let $\xi\in\mathcal{H}_{\Lambda}=\Lambda_{1}\widehat{\otimes
}\mathcal{H}$ be a supervector, then the mapping $\mathcal{A}_{fin}^{\Lambda
}(\mathcal{H})\ni\Theta\rightarrow\xi\circ\Theta\in\mathcal{A}%
_{fin}^{\Lambda }(\mathcal{H})$ can be extended to a continuous mapping on $%
\mathcal{A}^{\Lambda}(\mathcal{H})$ with the norm estimate%
\begin{equation}
\left\Vert \xi\circ\Theta\right\Vert _{\otimes}\leq\left\Vert \xi\right\Vert
_{\otimes}\left\Vert \Theta\right\Vert _{\otimes},\;\Theta\in\mathcal{A}%
^{\Lambda}(\mathcal{H}).  \label{h5}
\end{equation}
\end{lemma}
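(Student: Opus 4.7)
The plan is to establish the stronger operator norm bound $\|M_\xi\|_{\mathcal{L}(\mathcal{A}^\Lambda(\mathcal{H}))} \leq \|\xi\|_\otimes$ for the super-multiplication operator $M_\xi:\Theta\mapsto\xi\circ\Theta$. Since $\mathcal{A}_{fin}^\Lambda(\mathcal{H})$ is dense in $\mathcal{A}^\Lambda(\mathcal{H})$ and $M_\xi$ is already $\mathbb{C}$-linear and well defined on it, by density it suffices to prove the inequality for $\Theta\in\mathcal{A}_{fin}^\Lambda(\mathcal{H})$; the unique continuous extension then yields (\ref{h5}).

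For a simple supervector $\xi=\kappa\otimes f$ with $\kappa\in\Lambda_1$ and $f\in\mathcal{H}$, the defining formula of $\circ$ gives $\xi\circ(\mu\otimes F)=(\kappa\mu)\otimes(f\wedge F)$, so $M_{\kappa\otimes f}$ coincides with the tensor-product operator $L_\kappa\otimes a^{*}(f)$, where $L_\kappa$ denotes left Grassmann multiplication on $\Lambda$ and $a^{*}(f)$ the fermionic creation operator on $\mathcal{A}(\mathcal{H})$. Both factors are contractions in the correct sense: $\|L_\kappa\|_{\mathcal{L}(\Lambda)}\leq\|\kappa\|_\Lambda$ by the strengthened $\Lambda_1$-multiplication estimate cited in the text above, and $\|a^{*}(f)\|_{\mathcal{L}(\mathcal{A}(\mathcal{H}))}=\|f\|_\mathcal{H}$ by the classical creation-operator norm on the det-normalized antisymmetric Fock space. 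Taking the tensor product operator norm gives $\|M_{\kappa\otimes f}\|\leq\|\kappa\|_\Lambda\|f\|_\mathcal{H}=\|\xi\|_\otimes$, establishing (\ref{h5}) for simple $\xi$.

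The principal obstacle is extending this bound to an arbitrary $\xi\in\mathcal{H}_\Lambda$; triangle inequality on an expansion $\xi=\sum_k\kappa_k\otimes f_k$ fails because the Hilbert cross norm on $\Lambda_1\widehat{\otimes}\mathcal{H}$ is strictly weaker than the projective norm. Instead I would pass to the singular value decomposition $\xi=\sum_k c_k\,\kappa_k\otimes f_k$ with orthonormal families $\{\kappa_k\}\subset\Lambda_1$, $\{f_k\}\subset\mathcal{H}$, and $c_k>0$ satisfying $\sum_k c_k^{2}=\|\xi\|_\otimes^{2}$, and prove the positive-operator inequality $M_\xi^{*}M_\xi\leq\|\xi\|_\otimes^{2}I$ directly. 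Using the fermionic CAR $\{a(f_k),a^{*}(f_l)\}=\delta_{kl}I$, valid precisely because the $f_k$ are orthonormal, expansion yields
\[
M_\xi^{*}M_\xi \;=\; \Bigl(\textstyle\sum_k c_k^{2}\,L_{\kappa_k}^{*}L_{\kappa_k}\Bigr)\otimes I \;-\; \sum_{k,l} c_k c_l\,L_{\kappa_k}^{*}L_{\kappa_l}\otimes a^{*}(f_l)a(f_k),
\]
whose diagonal part satisfies $\sum_k c_k^{2}\,L_{\kappa_k}^{*}L_{\kappa_k}\leq\sum_k c_k^{2}\,I=\|\xi\|_\otimes^{2}I$, because $L_{\kappa_k}^{*}L_{\kappa_k}\leq\|\kappa_k\|_\Lambda^{2}I=I$ for the unit vectors $\kappa_k\in\Lambda_1$.

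The lemma then reduces to the positivity of the off-diagonal remainder, which is the delicate point. The natural strategy is to identify it with $Y^{*}Y$ for a suitable $Y$ built from the anticommuting $L_{\kappa_k}$ (odd in the Grassmann $\mathbb{Z}_2$-grading) and the fermionic annihilators $a(f_k)$, with the Grassmann signs and the CAR signs conspiring. Should a clean tensor factorization prove elusive, a bi-degree-wise computation on each summand $\Lambda_p\widehat{\otimes}\mathcal{A}_n(\mathcal{H})$, using the explicit form of $L_{\kappa_k}^{*}$ derived from the weighted $\Lambda$-norm together with the antisymmetry of the exterior product, should close the argument. These sign-bookkeeping details are precisely of the kind the paper defers to Appendix \ref{calc}.
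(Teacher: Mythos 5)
Your setup is sound as far as it goes: the density argument, the identification $M_{\kappa\otimes f}=L_{\kappa}\otimes a^{+}(f)$ with $\Vert L_{\kappa}\Vert\leq\Vert\kappa\Vert_{\Lambda}$ and $\Vert a^{+}(f)\Vert=\Vert f\Vert$, the correct warning that the triangle inequality over an expansion of $\xi$ only controls the projective norm, and the algebra giving $M_{\xi}^{*}M_{\xi}=D-R$ with $D=\left(\sum_{k}c_{k}^{2}L_{\kappa_{k}}^{*}L_{\kappa_{k}}\right)\otimes I$ and $R=\sum_{k,l}c_{k}c_{l}L_{\kappa_{k}}^{*}L_{\kappa_{l}}\otimes a^{+}(f_{l})a^{-}(f_{k})$. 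The gap is that the step you defer --- positivity of $R$ --- is not a bookkeeping detail but a false statement, so the reduction cannot be closed. Take $\xi=\kappa_{1}\otimes f_{1}+\kappa_{2}\otimes f_{2}$ (so $c_{1}=c_{2}=1$, $\Vert\xi\Vert_{\otimes}^{2}=2$) and the test tensor $\Theta=\kappa_{1}\otimes(f_{1}\wedge g)+\kappa_{2}\otimes(f_{2}\wedge g)$ with $g\perp f_{1},f_{2}$, all vectors normalized. The two nonvanishing terms of $\xi\circ\Theta$ interfere constructively,
\begin{equation*}
\xi\circ\Theta=2\,\kappa_{1}\kappa_{2}\otimes(f_{1}\wedge f_{2}\wedge g),\qquad \Vert\xi\circ\Theta\Vert_{\otimes}^{2}=4\cdot\Vert\kappa_{1}\kappa_{2}\Vert_{\Lambda}^{2}=4\cdot\tfrac{1}{4}=1,
\end{equation*}
using $\Vert\kappa_{1}\kappa_{2}\Vert_{\Lambda}^{2}=(2!)^{-2}$ from (\ref{h4}), whereas the diagonal term gives $(\Theta\mid D\,\Theta)=\Vert(L_{\kappa_{1}}\otimes I)\Theta\Vert_{\otimes}^{2}+\Vert(L_{\kappa_{2}}\otimes I)\Theta\Vert_{\otimes}^{2}=\tfrac{1}{4}+\tfrac{1}{4}=\tfrac{1}{2}$. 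Hence $(\Theta\mid R\,\Theta)=\tfrac{1}{2}-1=-\tfrac{1}{2}<0$, and in particular $M_{\xi}^{*}M_{\xi}\not\leq D$. (The lemma itself is of course safe here: $1\leq\Vert\xi\Vert_{\otimes}^{2}\Vert\Theta\Vert_{\otimes}^{2}=4$.) No $Y^{*}Y$ factorization and no sign conspiracy can rescue this; the structural reason is that the adjoint $L_{\kappa_{k}}^{*}$ taken with respect to the weighted norm (\ref{h4}) does not satisfy CAR-type relations with $L_{\kappa_{l}}$, so the cross terms of $M_{\xi}^{*}M_{\xi}$ cannot be organized into a complete square.

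The paper's proof in Appendix \ref{prod} controls exactly these cross terms by a different, combinatorial mechanism: expand $\Theta=\sum_{\mathbf{L}}p!\,\kappa_{\mathbf{L}}\otimes F(\mathbf{L})$ in the ON basis $\left\{ p!\,\kappa_{\mathbf{L}}\right\}$ of $\Lambda$, observe that the component of $\xi\circ\Theta$ along a fixed unit vector $(p+1)!\,\kappa_{\mathbf{K}}$ is a sum of exactly $p+1$ terms $h_{m}\wedge F(\mathbf{L})$ (one for each splitting $\mathbf{K}=\left\{ m\right\}\cup\mathbf{L}$), each weighted by the factor $(p+1)^{-1}$ produced by the norm (\ref{h4}), and then apply $\left(\sum_{j=1}^{n}x_{j}\right)^{2}\leq n\sum_{j=1}^{n}x_{j}^{2}$ together with $\Vert h_{m}\wedge F\Vert\leq\Vert h_{m}\Vert\,\Vert F\Vert$. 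It is this cancellation between the multiplicity $p+1$ and the weight $(p+1)^{-1}$, not operator positivity, that yields (\ref{h5}); in my example the two interfering terms are precisely the $p+1=2$ splittings of $\mathbf{K}=\left\{1,2\right\}$. If you want to salvage your approach, you must bound $\Vert M_{\xi}\Theta\Vert_{\otimes}$ degree by degree with this counting argument rather than through a positivity decomposition of $M_{\xi}^{*}M_{\xi}$.
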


The proof of this Lemma is given in the Appendix \ref{prod}.

If $A$ is a bounded operator on $\mathcal{H}$, then $\kappa _{0}\otimes A$
is a bounded operator on $\mathcal{H}_{\Lambda }$ that maps $\xi =\mu
\otimes f\in \mathcal{H}_{\Lambda }$ onto $\left( \kappa _{0}\otimes
A\right) \xi =\mu \otimes Af\in \mathcal{H}_{\Lambda }$. To simplify the
notations we write $A\,\xi $ instead of $\left( \kappa _{0}\otimes A\right)
\xi $.

The fermionic coherent vectors are defined as \cite{Kupsch/Smolyanov:1998}%
\begin{equation}
\mathcal{H}_{\Lambda }\ni \xi \rightarrow \exp \,\xi =\sum_{p=0}^{\infty
}\left( p!\right) ^{-1}\xi ^{p}\in \mathcal{A}^{\Lambda }(\mathcal{H}).
\label{h6}
\end{equation}%
The norm convergence of this series follows from the estimate (\ref{h5})
with $\left\Vert \exp \,\xi \right\Vert _{\otimes }^{2}\leq \newline
\sum_{p}\left( p!\right) ^{-2}\left\Vert \xi ^{p}\right\Vert _{\otimes
}^{2}\leq \sum_{p}\left( p!\right) ^{-2}\left\Vert \xi \right\Vert _{\otimes
}^{2p}$. The exponential $\exp \,\xi $ is therefore an entire analytic
function on $\mathcal{H}_{\Lambda }$, and the usual multiplication rule $%
\left( \exp \,\xi \right) \circ \left( \exp \,\eta \right) =\exp \,(\xi
+\eta )$ holds. The exponential vectors have even parity. If $\left\{
e_{k}\mid k\in \mathbb{N}\right\} $ is an ON basis of $\mathcal{H}$, the
tensors $\left\{ e_{\mathbf{K}}\mid \mathbf{K}\subset \mathbb{N}\right\} $
are an ON basis of $\mathcal{A}(\mathcal{H})$. Any supervector has the
representation $\xi =\sum_{k\in \mathbb{N}}\lambda _{k}\otimes e_{k}$ with
elements $\lambda _{k}\in \Lambda _{1},\,\sum_{k\in \mathbb{N}}\left\Vert
\lambda _{k}\right\Vert _{\Lambda }^{2}=\left\Vert \xi \right\Vert _{\otimes
}^{2}$. Then the exponential vector is the series $\exp \,\xi =\sum_{\mathbf{%
K}\subset \mathbb{N}}\lambda _{\mathbf{K}}\otimes e_{\mathbf{K}}$, and $%
\left( \exp \,\xi \mid \exp \,\xi \right) =\sum_{\mathbf{K}\subset \mathbb{N}%
}(\lambda _{\mathbf{K}})^{\ast }\lambda _{\mathbf{K}}=\exp \left( \xi \mid
\xi \right) $ follows. The $\Lambda $-inner product of two exponential
vectors is therefore%
\begin{equation}
\left( \exp \,\xi \mid \exp \,\eta \right) =\exp \left( \xi \mid \eta
\right) .  \label{h16}
\end{equation}

If $\Theta$ and $\Xi$ are tensors in $\mathcal{A}^{\Lambda}(\mathcal{H})$,
which have even parity, and for which the product $\Theta\circ\Xi$ is
defined, the formula
\begin{equation}
\left( \exp\,\xi\mid\Theta\circ\Xi\right) =\left( \exp\,\xi\mid
\Theta\right) \left( \exp\,\xi\mid\Xi\right)  \label{h17}
\end{equation}
is valid. This rule is obviously true for $\Theta=\exp\,\eta$ and $\Xi
=\exp\,\zeta$ with $\eta,\zeta\in\mathcal{H}_{\Lambda}$. If $\eta$ is a
tensor of degree $2p$ and $\zeta$ is a tensor of degree $2q$, the identity (%
\ref{h17}) follows for the terms with the highest tensor degrees, i.e. $%
\Theta=\eta^{p}$ and $\Xi=\zeta^{q}$. Some algebra $-$ using techniques of
Appendix B in \cite{Kupsch/Smolyanov:1998} -- leads to the rule for $%
\Theta,\Xi\in\mathcal{A}_{fin}^{\Lambda}(\mathcal{H})$ with even parity.
Then continuity arguments extend the identity to those elements of $\mathcal{%
A}^{\Lambda}(\mathcal{H})$, for which $\Theta\circ\Xi$ is defined.

The $\mathbb{C}$-linear span of $\left\{ \exp\,\xi\mid\xi\in\mathcal{H}%
_{\Lambda}\right\} $ is called $\mathcal{C}(\mathcal{H}_{\Lambda})$, the $%
\Lambda$-linear span $\Lambda\otimes\mathcal{C}(\mathcal{H}_{\Lambda})$ is
called $\mathcal{C}^{\Lambda}(\mathcal{H}_{\Lambda})$. With the
identification $\mathcal{C}(\mathcal{H}_{\Lambda})\simeq\kappa_{0}\otimes%
\mathcal{C}(\mathcal{H}_{\Lambda})$ the space $\mathcal{C}(\mathcal{H}%
_{\Lambda})$ is a subspace of $\mathcal{C}^{\Lambda}(\mathcal{H}_{\Lambda})$%
, and we have the inclusions $\mathcal{H}_{\Lambda}\subset\mathcal{C}(%
\mathcal{H}_{\Lambda })\subset\mathcal{C}^{\Lambda}(\mathcal{H}%
_{\Lambda})\subset\mathcal{A}^{\Lambda}(\mathcal{H})$. Despite the values
taken by functions $\Phi \in\mathcal{C}(\mathcal{H}_{\Lambda})$ only
trivially intersect with the fermionic Fock space $\mathcal{A}(\mathcal{H})$
the following Lemmata are true.

\begin{lemma}
\label{analytic}A tensor $\Xi\in\mathcal{A}^{\Lambda}(\mathcal{H})$ is
uniquely determined by the function \newline
$\mathcal{H}_{\Lambda}\ni \zeta\rightarrow\left( \exp\,\zeta\mid\Xi\right)
\in{\Lambda}$.
\end{lemma}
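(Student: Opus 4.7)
I want to prove injectivity: if $(\exp\,\zeta \mid \Xi) = 0$ for every $\zeta \in \mathcal{H}_{\Lambda}$, then $\Xi = 0$. The strategy is to separate the information encoded in the generating function first by homogeneity in $\zeta$, then by an ON basis of $\mathcal{H}$, and finally by the Grassmann structure of $\Lambda$.

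First I would rescale $\zeta \mapsto t\zeta$ for $t \in \mathbb{C}$. Using $\|\zeta^p\|_\otimes \leq \|\zeta\|_\otimes^p$ from Lemma \ref{product2},
\[
(\exp(t\zeta) \mid \Xi) \;=\; \sum_{p \geq 0} \frac{\bar t^{\,p}}{p!}\,(\zeta^p \mid \Xi)
\]
is an entire $\Lambda$-valued function of $t$, so every Taylor coefficient must vanish: $(\zeta^p \mid \Xi) = 0$ for all $p \geq 0$ and $\zeta \in \mathcal{H}_{\Lambda}$. Writing $\Xi = \sum_n \Xi^{(n)}$ with $\Xi^{(n)} \in \Lambda \widehat{\otimes} \mathcal{A}_{n}(\mathcal{H})$ and observing that $\zeta^p$ has $\mathcal{H}$-degree exactly $p$, the $\Lambda$-inner product pairs only equal $\mathcal{H}$-degrees, so in fact $(\zeta^p \mid \Xi^{(p)}) = 0$ for every $p$ and every $\zeta$.

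I would then polarize: replacing $\zeta$ by $s_1 \zeta_1 + \cdots + s_p \zeta_p$ with scalars $s_i \in \mathbb{C}$ and using that $\circ$ is commutative on even-parity elements, the multinomial expansion of $\zeta^p$ contains the term $p!\,s_1 \cdots s_p\,\zeta_1 \circ \cdots \circ \zeta_p$. Reading off the coefficient of $\bar s_1 \cdots \bar s_p$ yields
\[
(\zeta_1 \circ \cdots \circ \zeta_p \mid \Xi^{(p)}) = 0 \qquad \text{for all } \zeta_1, \ldots, \zeta_p \in \mathcal{H}_{\Lambda}.
\]
Specialising $\zeta_i = \mu_i \otimes e_{k_i}$ with $\mu_i \in \Lambda_1$ and $k_1 < \cdots < k_p$ in an ON basis $\{e_k\}$ of $\mathcal{H}$, the left factor becomes $(\mu_1 \cdots \mu_p) \otimes e_{\mathbf{K}}$ with $\mathbf{K} = \{k_1, \ldots, k_p\}$. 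Expanding $\Xi^{(p)} = \sum_{|\mathbf{L}|=p} \eta_\mathbf{L} \otimes e_\mathbf{L}$ in the basis and using orthonormality extracts $(\mu_1 \cdots \mu_p)^*\,\eta_\mathbf{K} = 0$ for every choice of $\mu_i \in \Lambda_1$.

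The main obstacle is the final step: concluding $\eta_\mathbf{K} = 0$ from these identities. By antiunitarity of the involution and density of decomposable Grassmann products in $\Lambda_p$, the condition upgrades to $\nu\,\eta_\mathbf{K} = 0$ for every $\nu \in \Lambda_p$. Here I would exploit the infinite dimensionality of $\Lambda_1$: choose an ON basis $\{\kappa_j\}_{j \in \mathbb{N}}$ of $\Lambda_1$, expand $\eta_\mathbf{K} = \sum_\mathbf{L} c_\mathbf{L} \kappa_\mathbf{L}$, and suppose some $c_{\mathbf{L}^\star} \neq 0$. Since there are infinitely many generators, one can pick a $p$-element index set $\mathbf{M}$ disjoint from $\mathbf{L}^\star$. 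In
\[
\kappa_\mathbf{M}\,\eta_\mathbf{K} \;=\; \sum_{\mathbf{L} \cap \mathbf{M} = \emptyset} \pm c_\mathbf{L}\,\kappa_{\mathbf{M} \cup \mathbf{L}}
\]
the coefficient of $\kappa_{\mathbf{M} \cup \mathbf{L}^\star}$ is $\pm c_{\mathbf{L}^\star} \neq 0$, contradicting $\kappa_\mathbf{M}\,\eta_\mathbf{K} = 0$. Hence $\eta_\mathbf{K} = 0$ for every $\mathbf{K}$, so $\Xi^{(p)} = 0$ for every $p$ and therefore $\Xi = 0$, as required.
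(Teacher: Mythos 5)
Your proof is correct. The paper itself gives no argument for this lemma: it simply defers to Lemma 4 and Corollary 2 of Sec.~5.3 of the reference [Kupsch/Smolyanov:1998], so your self-contained derivation is necessarily a different (and more transparent) route. The skeleton --- extract Taylor coefficients in $t$ to reduce to $(\zeta^p\mid\Xi)=0$, use the degree grading of the $\Lambda$-valued inner product to isolate $\Xi^{(p)}$, polarize, and specialize to basis vectors --- is routine and sound; all the limit interchanges are covered by the continuity estimates $\Vert(\Theta\mid\Xi)\Vert_\Lambda\leq\sqrt{3}\,\Vert\Theta\Vert_\otimes\Vert\Xi\Vert_\otimes$ and $\Vert\lambda\mu\Vert_\Lambda\leq\sqrt{3}\,\Vert\lambda\Vert_\Lambda\Vert\mu\Vert_\Lambda$. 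The genuinely valuable part is your last step: you correctly identify that the whole lemma hinges on the implication ``$\nu\,\eta=0$ for all $\nu\in\Lambda_p$ $\Rightarrow$ $\eta=0$'', which is \emph{false} in a finitely generated Grassmann algebra (top-degree elements annihilate everything), and you rescue it by using the infinite dimensionality of $\Lambda_1$ to choose a multiplier $\kappa_{\mathbf{M}}$ with index set disjoint from any given $\mathbf{L}^{\star}$. That disjointness argument is exactly the mechanism that makes the statement work over this particular superalgebra, and making it explicit is a genuine gain over the paper's bare citation. Two cosmetic points you may wish to tidy: the involution reverses products, so $(\mu_1\cdots\mu_p)^{\ast}=\mu_p^{\ast}\cdots\mu_1^{\ast}$, which still exhausts the decomposable elements of $\Lambda_p$ since $\mu\mapsto\mu^{\ast}$ is a bijection of $\Lambda_1$; and the ON basis of $\Lambda$ is $\{p!\,\kappa_{\mathbf{L}}\}$ rather than $\{\kappa_{\mathbf{L}}\}$, a normalization that does not affect the coefficient-extraction argument.
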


The proof of this Lemma follows from Sec. 5.3 of \cite{Kupsch/Smolyanov:1998}
Lemma 4 and Corollary 2.

\begin{lemma}
\label{regOp1} Let $\hat{T}$ be a regular operator on $\mathcal{A}^{\Lambda
}(\mathcal{H})$. Then $\hat{T}$ is uniquely determined by its values on the
set of coherent vectors $\left\{ \exp\xi\mid\xi\in\mathcal{H}_{\Lambda
}\right\} $.
\end{lemma}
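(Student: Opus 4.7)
The plan is to reduce the statement to Lemma~\ref{analytic} by passing through the superadjoint. Suppose $\hat{T}_{1}$ and $\hat{T}_{2}$ are regular operators that agree on every $\exp\xi$, $\xi\in\mathcal{H}_{\Lambda}$. Their difference $\hat{T}:=\hat{T}_{1}-\hat{T}_{2}$ is again regular (regularity is preserved by sums), so the task reduces to showing that $\hat{T}\exp\xi=0$ for all $\xi\in\mathcal{H}_{\Lambda}$ forces $\hat{T}=0$.

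First I would fix an arbitrary $\Xi\in\mathcal{A}^{\Lambda}(\mathcal{H})$ and form the $\Lambda$-valued pairing $(\Xi\mid\hat{T}\exp\xi)$, which vanishes by hypothesis. The defining property of the superadjoint, $(\Theta\mid\hat{T}\Xi')=(\hat{T}^{+}\Theta\mid\Xi')$, then converts this into $(\hat{T}^{+}\Xi\mid\exp\xi)=0$. Applying the hermiticity $(\Theta\mid\Xi')^{\ast}=(\Xi'\mid\Theta)$ of the $\Lambda$-valued inner product gives
\[
(\exp\xi\mid\hat{T}^{+}\Xi)=0\qquad\text{for every }\xi\in\mathcal{H}_{\Lambda}.
\]
Since $\hat{T}$ is regular, its superadjoint is the regular operator $\hat{T}^{+}=\sum_{j}\mu_{j}^{\ast}\otimes T_{j}^{\dagger}$, so in particular $\hat{T}^{+}\Xi$ is a well defined element of $\mathcal{A}^{\Lambda}(\mathcal{H})$. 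Lemma~\ref{analytic} therefore applies to $\hat{T}^{+}\Xi$ and forces $\hat{T}^{+}\Xi=0$. As $\Xi$ was arbitrary, $\hat{T}^{+}=0$ as an operator, and the explicit formula for the superadjoint of a regular operator yields $(\hat{T}^{+})^{+}=\hat{T}$, so $\hat{T}=0$, completing the argument.

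The main obstacle is not analytic but notational: one must arrange the chain of identities so that the tensor $\hat{T}^{+}\Xi$ ends up in the right slot of the inner product, namely the slot to which Lemma~\ref{analytic} is sensitive. This requires toggling once between the superadjoint identity and the hermiticity of $(\cdot\mid\cdot)$ in the correct order; once that is done, the two previously established facts — the explicit regular form of $\hat{T}^{+}$ and the uniqueness statement of Lemma~\ref{analytic} — combine immediately to give the result.
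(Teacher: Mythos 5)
Your argument is correct and follows essentially the same route as the paper's proof: both pass the hypothesis through the superadjoint $\hat{T}^{+}=\sum_{j}\mu_{j}^{\ast}\otimes T_{j}^{\dagger}$ of the regular operator and invoke Lemma~\ref{analytic} to conclude that the relevant tensor vanishes. The only cosmetic difference is that you apply Lemma~\ref{analytic} once to $\hat{T}^{+}\Xi$ for arbitrary $\Xi$ and then return to $\hat{T}$ via $(\hat{T}^{+})^{+}=\hat{T}$, whereas the paper first deduces $\hat{T}^{+}\exp\eta=0$ and then applies the lemma a second time to conclude $\hat{T}\,\Xi=0$ directly.
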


\begin{proof}
The operator $\hat{T}$ is a linear and continuous operator on the Hilbert
space $\mathcal{A}^{\Lambda}(\mathcal{H})$. It is sufficient to prove that $%
\hat{T}\exp\xi=0,\,\xi\in\mathcal{H}_{\Lambda}$, implies $\hat{T}\,\Xi=0$
for $\Xi\in\mathcal{A}^{\Lambda}(\mathcal{H})$. Given the regular operator $%
\hat{T}=\sum_{j}\mu_{j}\otimes T_{j}$ it has the bounded superadjoint $\hat{T%
}^{+}=\sum_{j}\mu_{j}^{\ast}\otimes T_{j}^{\dagger}$. Assume $\hat {T}%
\exp\xi=0$ is true for all $\xi\in\mathcal{H}_{\Lambda}$. Then $0=\left(
\hat{T}\exp\xi\mid\exp\eta\right) =\left( \exp\xi\mid\hat{T}^{+}\exp
\eta\right) $ holds for all $\xi,\eta\in\mathcal{H}_{\Lambda}$, and -- as a
consequence of Lemma \ref{analytic} -- the identity $\hat{T}^{+}\exp\eta=0$
is true for all $\eta\in\mathcal{H}_{\Lambda}$. But then we have $\left(
\exp \xi\mid\hat{T}\Xi\right) =\left( \hat{T}^{+}\exp\xi\mid\Xi\right) =0$
for$\,\xi\in\mathcal{H}_{\Lambda}$ and $\Xi\in\mathcal{A}^{\Lambda }(%
\mathcal{H})$. Hence $\hat{T}\,\Xi=0$ is true for $\Xi\in\mathcal{A}%
^{\Lambda}(\mathcal{H})$.
\end{proof}

This Lemma implies

\begin{corollary}
\label{regOp2} Let $\hat{T}$ be a regular operator on $\mathcal{A}^{\Lambda
}(\mathcal{H})$. Then $\hat{T}$ is uniquely determined by the function $%
\mathcal{H}_{\Lambda}\ni\xi\rightarrow\left( \exp\xi\mid\hat{T}\,\exp
\xi\right) \in\Lambda$.
\end{corollary}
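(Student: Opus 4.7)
The plan is to reduce the corollary to the already-proved Lemma \ref{regOp1} via a polarization argument in a pair of complex parameters. Concretely, I would take two regular operators $\hat{T}_1, \hat{T}_2$ that induce the same diagonal function $\xi \mapsto (\exp\xi \mid \hat{T}_j \exp\xi)$ on $\mathcal{H}_{\Lambda}$ and set $\hat{S} := \hat{T}_1 - \hat{T}_2$, which is again a regular operator. Lemma \ref{regOp1} reduces the task to showing $\hat{S}\exp\xi = 0$ for every $\xi \in \mathcal{H}_{\Lambda}$, and Lemma \ref{analytic} further reduces it to establishing the off-diagonal identity $(\exp\eta \mid \hat{S}\exp\xi) = 0$ for all $\xi, \eta \in \mathcal{H}_{\Lambda}$.

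To bridge the gap between the diagonal and the off-diagonal statement I would introduce two complex parameters. For fixed $\xi, \eta \in \mathcal{H}_{\Lambda}$ and $(s,t) \in \mathbb{C}^2$ the vector $\zeta(s,t) := s\xi + t\eta$ lies in $\mathcal{H}_{\Lambda}$, and since supervectors have even parity they commute under $\circ$, so the multiplication rule $(\exp\alpha)\circ(\exp\beta) = \exp(\alpha+\beta)$ gives
\[
  \exp \zeta(s,t) \;=\; \exp(s\xi)\circ\exp(t\eta)
  \;=\; \sum_{j,l \geq 0} \frac{s^j t^l}{j!\,l!}\,\xi^j \circ \eta^l,
\]
a series whose absolute convergence in the $\otimes$-norm follows by iterating the estimate of Lemma \ref{product2}. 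Applying $\hat{S}$ and forming the $\Lambda$-inner product (conjugate-linear in the first slot) produces the double power series
\[
  \bigl(\exp\zeta(s,t) \,\bigm|\, \hat{S}\exp\zeta(s,t)\bigr)
  \;=\; \sum_{j,l,j',l' \geq 0}
  \frac{\bar{s}^{\,j}\bar{t}^{\,l} s^{j'} t^{l'}}{j!\,l!\,j'!\,l'!}\,
  \bigl(\xi^j \circ \eta^l \mid \hat{S}(\xi^{j'}\circ \eta^{l'})\bigr)
\]
with $\Lambda$-valued coefficients. Granting that identical vanishing of this series on $(s,t) \in \mathbb{C}^2$ forces each coefficient to vanish, specializing to $j = l' = 0$ gives $(\eta^l \mid \hat{S}\xi^{j'}) = 0$ for all $l, j'$, and summing on $l, j'$ with weights $1/(l!\,j'!)$ yields the desired $(\exp\eta \mid \hat{S}\exp\xi) = 0$.

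The main obstacle I anticipate is precisely this coefficient-uniqueness step, since $s$ and $\bar s$ (and $t, \bar t$) are not independent complex variables. My plan is to invoke the Wirtinger viewpoint: writing $s = x+iy$ and $t = u+iv$ turns the left-hand side into a real-analytic $\Lambda$-valued function of $(x,y,u,v) \in \mathbb{R}^4$, whose identical vanishing forces every Taylor coefficient at the origin to be zero in $\Lambda$, which in the independent coordinates $s, \bar s, t, \bar t$ is exactly the required statement. The prerequisite uniform convergence of the double series on compact subsets of $\mathbb{C}^2$ -- needed to identify these coefficients with the claimed inner products -- follows from the iterated norm estimate of Lemma \ref{product2} together with the boundedness of $\hat{S}$.
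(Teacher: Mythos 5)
Your argument is correct and is essentially the paper's own proof: the paper likewise reduces the statement to Lemmata \ref{analytic} and \ref{regOp1} and then asserts that the function $(\xi,\eta)\mapsto\left(\exp\xi\mid\hat{T}\exp\eta\right)$, being antianalytic in $\xi$ and analytic in $\eta$, is determined by its values on the diagonal $\xi=\eta$. Your two-parameter expansion of $\left(\exp(s\xi+t\eta)\mid\hat{S}\exp(s\xi+t\eta)\right)$ in $s,\bar{s},t,\bar{t}$ is just an explicit verification of that diagonal-determination step, which the paper leaves implicit.
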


\begin{proof}
The Lemmata \ref{analytic} and \ref{regOp1} imply that $\hat{T}$ is
determined by the function $\varphi(\xi,\eta):=\left( \exp\xi\mid\hat{T}%
\,\exp \eta\right) $, which is analytic in $\eta\in\mathcal{H}_{\Lambda} $
and antianalytic in $\xi\in\mathcal{H}_{\Lambda}$. Such a function is
uniquely determined by its values on the diagonal $\xi=\eta$.
\end{proof}

The Fock space creation and annihilation operators $a^{+}(h)F=h\wedge
F,\,h\in \mathcal{H},\,F\in \mathcal{A}(\mathcal{H})$, and $a^{-}(h)=\left(
a^{+}(h)\right) ^{\dagger }$ have an extension to $\mathbb{C}$-linear
operators $b^{+}(\eta )$ and $b^{-}(\eta ),\,\eta \in \mathcal{H}_{\Lambda }$%
, on $\mathcal{A}^{\Lambda }(\mathcal{H})$. The operator $b^{+}(\eta )$ is
simply given by $b^{+}(\eta )\,\Xi =\eta \circ \Xi $ with $\,\eta \in
\mathcal{H}_{\Lambda }$ for all $\Xi \in \mathcal{A}^{\Lambda }(\mathcal{H})$%
. The operator $b^{-}(\eta )$ is the superadjoint$\;b^{-}(\eta )=\left(
b^{+}(\eta )\right) ^{+}$. For $\eta =\sum_{j}\mu _{j}\otimes f_{j}\in
\mathcal{H}_{\Lambda }$ we obtain $b^{+}(\eta )=\sum_{j}\mu _{j}\otimes
a^{+}(f_{j})$ and $b^{-}(\eta )=\sum_{j}\mu _{j}^{\ast }\otimes a^{-}(f_{j})$%
. The operators $b^{\pm }(\eta )$ are continuous with operator norm $%
\left\Vert b^{\pm }(\eta )\right\Vert \leq \left\Vert \eta \right\Vert
_{\otimes },\,\eta \in \mathcal{H}_{\Lambda }$, cf. the end of Appendix \ref%
{prod}. The mapping $\mathcal{H}_{\Lambda }\ni \eta \rightarrow b^{+}(\eta )$
is $\mathbb{C}$-linear, and $\eta \rightarrow b^{-}(\eta )$ is $\mathbb{C}$%
-antilinear. The operators $b^{\pm }(\eta )$ are regular operators. Lemma %
\ref{regOp1} implies that the operators $b^{\pm }(\eta )$ are uniquely
determined by their values on coherent vectors $\exp \,\xi ,\,\xi \in
\mathcal{H}_{\Lambda }$. A simple calculation gives for all $\xi ,\eta \in
\mathcal{H}_{\Lambda }$%
\begin{equation}
b^{+}(\eta )\exp \xi =\eta \cdot \exp \xi ,\;b^{-}(\eta )\exp \xi =\left(
\eta \mid \xi \right) \exp \xi .  \label{h10}
\end{equation}

\subsection{Exponentials of tensors of second degree\label{exp2}}

Given a skew symmetric HS operator $X$ on $\mathcal{H}$, then there exists
exactly one tensor $\Omega(X)\in\mathcal{A}_{2}(\mathcal{H})$ such that $%
\left\langle \Omega(X)\parallel f\wedge g\right\rangle =\left\langle
f\parallel Xg\right\rangle $ for all $f,g\in\mathcal{H}$, cf. Appendix \ref%
{exponentials}. The exponentials of tensors in $\mathcal{A}_{2}(\mathcal{H})$
have been investigated in the literature, e.g. in Chap. 12 of \cite%
{Pressley/Segal:1986}. But for completeness and to fix the normalizations we
derive the following statements in Appendix \ref{exponentials}.

\begin{enumerate}
\item For $X\in \mathcal{L}_{2}^{-}(\mathcal{H})$ the exponential series
\begin{equation}
\exp \Omega (X)=1_{vac}+\frac{1}{1!}\Omega (X)+\frac{1}{2!}\Omega (X)\wedge
\Omega (X)+...  \label{h11}
\end{equation}%
converges uniformly in $\mathcal{A}(\mathcal{H})$, and the mapping $\mathcal{%
L}_{2}^{-}(\mathcal{H})\ni X\rightarrow \exp \Omega (X)\in \mathcal{A}(%
\mathcal{H})$ is entire analytic.

\item The inner product of two of such tensors is%
\begin{equation}
\left( \exp \Omega (X)\mid \exp \Omega (Y)\right) =\sqrt{\det (I+X^{\dagger
}Y)}=\sqrt{\det (I+YX^{\dagger })},\;X,Y\in \mathcal{L}_{2}^{-}(\mathcal{H}).
\label{h15}
\end{equation}
\end{enumerate}

The mapping $\mathcal{A}(\mathcal{H})\ni F\rightarrow \kappa _{0}\otimes
F\in \mathcal{A}^{\Lambda }(\mathcal{H})$ gives a natural embedding of the
Fock space $\mathcal{A}(\mathcal{H})$ into the $\Lambda $-module $\mathcal{A}%
^{\Lambda }(\mathcal{H})=\Lambda \widehat{\otimes }\mathcal{A}(\mathcal{H})$%
. The tensors (\ref{h11}) can therefore be taken as the elements $\Psi
(X):=\kappa _{0}\otimes \exp \Omega (X)=\exp \left( \kappa _{0}\otimes
\Omega (X)\right) $ of $\mathcal{A}^{\Lambda }(\mathcal{H})$. In Appendix %
\ref{ultra} we prove that the products
\begin{equation}
\Psi (X,\xi ):=\left( \exp \xi \right) \circ \Psi (X)=\Psi (X)\circ \left(
\exp \xi \right) =\exp \left( \xi +\kappa _{0}\otimes \Omega (X)\right)
\label{h12}
\end{equation}%
of the coherent vectors (\ref{h6}) and of the exponentials $\Psi (X)$ are
well defined elements of $\mathcal{A}^{\Lambda }(\mathcal{H})$ for all $X\in
\mathcal{L}_{2}^{-}(\mathcal{H})$ and $\xi \in \mathcal{H}_{\Lambda }$ The
mapping $\mathcal{L}_{2}^{-}(\mathcal{H})\times \mathcal{H}_{\Lambda }\ni
(X,\xi )\rightarrow \Psi (X,\xi )\in \mathcal{A}^{\Lambda }(\mathcal{H})$ is
entire analytic. The vectors $\Psi (X,\xi )$ are called \textit{%
ultracoherent vectors}, cf. the bosonic case in Ref. \cite%
{Kupsch/Banerjee:2006}. The $\Lambda $-inner product of a coherent vector
with $\Psi (X,\eta )$ is, cf. Appendix \ref{ultra},%
\begin{equation}
\left( \exp \xi \mid \Psi (X,\eta )\right) =\exp \left( \xi \mid \eta +\frac{%
1}{2}X\,\xi ^{\ast }\right) ,  \label{h14}
\end{equation}%
and the $\Lambda $-inner product of an ultracoherent vector with itself is
calculated in Appendix \ref{ultra} as%
\begin{eqnarray}
&&\left( \Psi (X,\xi )\mid \Psi (X,\xi )\right) =  \notag \\
&&\sqrt{\det (I+X^{\dag }X)}\exp \left( \frac{1}{2}\left\langle \xi ^{\ast
}\parallel A\xi ^{\ast }\right\rangle +\left\langle \xi ^{\ast }\parallel
B\xi \right\rangle +\frac{1}{2}\left\langle \xi \parallel A^{\dag }\,\xi
\right\rangle \right)  \label{h13}
\end{eqnarray}%
with the operators $A=X(I+X^{\dag }X)^{-1}$ and $B=(I+XX^{\dag })^{-1}$. The
mapping $\mathcal{H}_{\Lambda }\times \mathcal{H}_{\Lambda }\ni (\xi ,\eta
)\rightarrow \left\langle \xi \parallel T\eta \right\rangle =\left( \xi
^{\ast }\mid T\eta \right) \in \Lambda _{2}$ is well defined for any bounded
operator $T\in \mathcal{L}(\mathcal{H})$ with the norm estimate $\left\Vert
\left\langle \xi \parallel T\,\eta \right\rangle \right\Vert _{\Lambda }\leq
\left\Vert T\right\Vert \left\Vert \xi \right\Vert _{\otimes }\left\Vert
\eta \right\Vert _{\otimes }$. If $T$ is a skew symmetric operator, the form
$\left\langle \xi \parallel T\eta \right\rangle $ is symmetric in $\xi $ and
$\eta $. The exponential series of these $\Lambda _{2}$-valued forms is
absolutely converging within $\Lambda $, cf. Appendix \ref{exponentials}.

\section{Weyl operators and canonical transformations\label{Weyl}}

\subsection{Weyl operators for fermions\label{Weyl_0}}

It is convenient to define Weyl operators for fermions $W(\eta )$ with$%
\,\eta \in \mathcal{H}_{\Lambda }$ first on $\mathcal{C}(\mathcal{H}%
_{\Lambda })$ by their action on exponential vectors, as it has been done
for bosons, cf. Sec. 3.1. of \cite{Kupsch/Banerjee:2006},%
\begin{equation}
W(\eta )\exp \xi :=\mathrm{e}^{-\left( \eta \mid \xi \right) -\frac{1}{2}%
\left( \eta \mid \eta \right) }\exp \,\left( \eta +\xi \right) \in \Lambda _{%
\bar{0}}\otimes \mathcal{C}(\mathcal{H}_{\Lambda }).  \label{w1}
\end{equation}%
The operators $W(\eta )$ form a group with $W(0)=id$, and%
\begin{equation}
W(\xi )W(\eta )=\mathrm{e}^{-i\omega (\xi ,\eta )}W(\xi +\eta ),\;\xi ,\eta
\in \mathcal{H}_{\Lambda }.  \label{w2}
\end{equation}%
Thereby $\omega (\xi ,\eta )$ is the $\mathbb{R}$-bilinear antisymmetric
form
\begin{equation}
\mathcal{H}_{\Lambda }\times \mathcal{H}_{\Lambda }\ni (\xi ,\eta
)\rightarrow \omega (\xi ,\eta )=\frac{1}{2i}\left( \left( \xi \mid \eta
\right) -\left( \eta \mid \xi \right) \right) \in \Lambda _{2}\subset
\Lambda _{\bar{0}}.  \label{w3}
\end{equation}%
The relation (\ref{w2}) implies that $W(\eta )$ is invertible with $%
W^{-1}(\eta )=W(-\eta )$. The expectation of the Weyl operator between
exponential vectors is calculated as\newline
$\left( \exp \xi \mid W(\eta )\,\exp \xi \right) =\exp \left( \left( \xi
\mid \xi \right) +\left( \xi \mid \eta \right) -\left( \eta \mid \xi \right)
-\frac{1}{2}\left( \eta \mid \eta \right) \right) =\left( W(-\eta )\exp \xi
\mid \exp \xi \right) $. \newline These identities imply the relation $W^{+}(\eta
)=W(-\eta )$, valid on $\mathcal{C}(\mathcal{H}_{\Lambda })$.

For fixed $\eta \in \mathcal{H}_{\Lambda }$ the operators $\mathbb{R}\ni
t\rightarrow W(t\eta )$ form a one parameter group. The generator $D_{\eta }$
of this group follows from (\ref{w1}) as $D_{\eta }\exp \xi =\frac{d}{dt}%
W(t\eta )\exp \xi \mid _{t=0}=-\left( \eta \mid \xi \right) \exp \xi +\eta
\cdot \exp \xi $, i.e. $D_{\eta }=-b^{-}(\eta )+b^{+}(\eta ),$ where $b^{\pm
}(\eta )$ are the creation and annihilation operators (\ref{h10}). Since
these operators are bounded operators on $\mathcal{A}^{\Lambda }(\mathcal{H}%
) $, the Weyl operator can be defined by the exponential series expansion%
\begin{equation}
W(\eta )=\exp \left( b^{+}(\eta )-b^{-}(\eta )\right) =\mathrm{e}^{-\frac{1}{%
2}\left( \eta \mid \eta \right) }\exp \left( b^{+}(\eta )\right) \exp \left(
-b^{-}(\eta )\right)  \label{w5}
\end{equation}%
for $\eta \in \mathcal{H}_{\Lambda }$ as bounded operators on the module
Fock space $\mathcal{A}^{\Lambda }(\mathcal{H})$. The operator (\ref{w5}) is
a regular operator in the sense of Definition \ref{regular}, and it agrees
on $\mathcal{C}(\mathcal{H}_{\Lambda })$ with (\ref{w1}). As a consequence
of Lemma \ref{regOp1} the relations $W^{-1}(\eta )=W(-\eta )=W^{+}(\eta )$
are true on $\mathcal{A}^{\Lambda }(\mathcal{H})$, and the following
identity is valid for $\eta \in \mathcal{H}_{\Lambda }$ and all tensors $\Xi
,\Upsilon \in \mathcal{A}^{\Lambda }(\mathcal{H})$%
\begin{equation}
\left( W(\eta )\,\Xi \mid W(\eta )\,\Upsilon \right) =\left( W^{+}(\eta
)W(\eta )\,\Xi \mid \Upsilon \right) =\left( \Xi \mid \Upsilon \right) .
\label{w6}
\end{equation}%
In Appendix \ref{ultra} we calculate the action of $W(\eta )$ on the
ultracoherent vectors (\ref{h12}) and obtain the formula%
\begin{equation}
W(\eta )\,\Psi (X,\xi )=\mathrm{e}^{-\frac{1}{2}\left( \eta \mid \eta
\right) +\frac{1}{2}\left( \eta \mid X\eta ^{\ast }-2\xi \right) }\Psi
(X,\xi +\eta -X\eta ^{\ast })  \label{w7}
\end{equation}%
with $X\in \mathcal{L}_{2}^{-}$ and $\xi ,\eta \in \mathcal{H}_{\Lambda }$.
Finally, we list two properties of Weyl operators that are used in the
subsequent Sections:

\begin{enumerate}
\item Let $P$ be the projection operator onto the closed subspace $\mathcal{F%
}\subset \mathcal{H}$. If the restriction $PS=SP$ of the operator $S\in
\mathcal{L}(\mathcal{H})$ is a unitary operator on $\mathcal{F}$, and if $%
\eta $ is an element of $\Lambda \widehat{\otimes }\mathcal{F}\subset
\mathcal{H}_{\Lambda }$, then the following identity is true%
\begin{equation}
\hat{\Gamma}(S)W(\eta )\hat{\Gamma}(S^{\dag }P)=W(S\eta )\hat{\Gamma}(P).
\label{w8}
\end{equation}

\item Assume $P_{j},\,j=1,2$, are projection operators onto orthogonal
subspaces $\mathcal{H}_{j}=P_{j}\mathcal{H}$ of the Hilbert space $\mathcal{H%
}$, and $P=P_{1}+P_{2}$ denotes the projection operator onto $\mathcal{H}%
_{1}\oplus \mathcal{H}_{2}$. Then the Weyl relation (\ref{w2}) implies the
factorization into $W(P\eta )=W(P_{1}\eta )W(P_{2}\eta )=W(P_{2}\eta
)W(P_{1}\eta )$. Under the additional assumption that $\Xi \in \mathcal{A}%
^{\Lambda }(\mathcal{H})$ is the product $\Xi =\Xi _{1}\circ \Xi _{2}$ of
the tensors $\Xi _{j}\in \mathcal{A}^{\Lambda }(\mathcal{H}_{j}),\,j=1,2$,
where $\Xi _{1}$ has the parity $\pi (\Xi _{1})=k\in \left\{ 0,1\right\} $
we obtain
\begin{equation}
W(P\eta )\,\Xi =\left( W(P_{1}\eta )\,\Xi _{1}\right) \circ \left(
W((-1)^{k}P_{2}\eta )\,\Xi _{2}\right) .  \label{w10}
\end{equation}
\end{enumerate}

The identity (\ref{w8}) is a consequence of the transformation behaviour $%
\Gamma (S)a^{\pm }(h)\Gamma (S^{\dag }P)=a^{\pm }(Sh)\Gamma (P),\,h\in
\mathcal{F}$, of the creation/annihilation operators in $\mathcal{A}(%
\mathcal{H})$ under unitaritary transformations of $\mathcal{H}$. The
identity (\ref{w10}) follows from (\ref{w5}) and the relations $b^{\pm
}(P_{1}\eta )\,\Xi =\newline
\left( b^{\pm }(P_{1}\eta )\,\Xi _{1}\right) \circ \Xi _{2},\,b^{\pm
}(P_{2}\eta )\,\Xi =(-1)^{k}\Xi _{1}\circ \left( b^{\pm }(P_{2}\eta )\,\Xi
_{2}\right) $ for the module creation/annihilation operators (\ref{h10}).

\subsection{Canonical transformations\label{canonical}}

The canonical anticommutation relations (CAR) for the Fock space creation
and annihilation operators $\left\{ a^{-}(f),a^{+}(g)\right\} \equiv
a^{-}(f)a^{+}(g)+a^{+}(g)a^{-}(f)=\left( f\mid g\right) I$ and $\left\{
a^{+}(f),a^{+}(f)\right\} =\left\{ a^{-}(f),a^{-}(f)\right\} =0$ can be
written in the more condensed form using the antihermitean difference $%
\Delta(f):=a^{+}(f)-a^{-}(f).$ This difference is a function on $\mathcal{H}%
_{\mathbb{R}}$ -- the underlying real space of $\mathcal{H}$ -- which has
the inner product $\mathcal{H}_{\mathbb{R}}\times\mathcal{H}_{\mathbb{R}%
}\ni(f,g)\rightarrow\left( f\mid g\right) _{\mathbb{R}}=\mathrm{Re}\,\left(
f\mid g\right) \in\mathbb{R}$. The anticommutation relations%
\begin{equation}
\Delta(f)\Delta(g)+\Delta(g)\Delta(f)=-2\left( f\mid g\right) _{\mathbb{R}%
}I\;\mathrm{with}\;f,g\in\mathcal{H}_{\mathbb{R}}  \label{can1}
\end{equation}
are equivalent to the CAR of the creation and annihilation operators. In
Sec. \ref{rep} we construct unitary operators $T$ on $\mathcal{A}(\mathcal{H}%
)$ that implement $\mathbb{R}$-linear transformations of the creation and
annihilation operators%
\begin{equation}
Ta^{\pm}(f)T^{\dagger}=a^{\pm}(Uf)-a^{\mp}(Vf^{\ast})  \label{can2}
\end{equation}
such that the canonical anticommutation relations remain unchanged. Here $U$
and $V$ are bounded linear transformations on $\mathcal{H}$. The
transformations $a^{\pm}(f)\rightarrow a^{\pm}(Uf)-a^{\mp}(Vf^{\ast})$ are
called canonical transformations or \textit{Bogoliubov transformations}.
Using the operators $\Delta(f),\,f\in\mathcal{H}_{\mathbb{R}},$ the
transformation rules (\ref{can2}) can be combined into
\begin{equation}
T\Delta(f)T^{\dagger}=\Delta(Uf+Vf^{\ast}).  \label{can2a}
\end{equation}
The mapping
\begin{equation}
f\in\mathcal{H}_{\mathbb{R}}\rightarrow R(U,V)f:=Uf+Vf^{\ast}\in \mathcal{H}%
_{\mathbb{R}},  \label{can3}
\end{equation}
with $U,\,V\in\mathcal{L}(\mathcal{H})$ is the general continuous $\mathbb{R}
$-linear mapping on $\mathcal{H}_{\mathbb{R}}$. The canonical
anticommutation relations (\ref{can1}) are preserved, if the inner product
of $\mathcal{H}_{\mathbb{R}}$ is invariant against this mapping, i.e. if
\begin{equation}
\left( R(U,V)f\mid R(U,V)g\right) _{\mathbb{R}}=\left( Uf+Vf^{\ast
}\parallel Ug+Vg^{\ast}\right) _{\mathbb{R}}=\left( f\mid g\right) _{\mathbb{%
R}}  \label{can4}
\end{equation}
holds for all $f,g\in\mathcal{H}_{\mathbb{R}}$. An invertible operator $%
R(U,V)$ which satisfies this identity is an orthogonal transformation of the
space $\mathcal{H}_{\mathbb{R}}$. Such transformations actually form a
group, which is discussed in more detail in Sec. \ref{orthog}. For infinite
dimensional Hilbert spaces $\mathcal{H}$ -- needed for quantum field theory
-- an additional constraint turns out to be necessary: In order to obtain a
unitary representation, which has a vacuum state, the operator $V$ has to be
a HS operator \cite{Friedrichs:1953, Shale/Stinespring:1965, BSZ:1992}. This
restricted orthogonal group of the space $\mathcal{H}_{\mathbb{R}}$ will be
called $\mathcal{O}_{2}(\mathcal{H}_{\mathbb{R}})$.

There exists a large number of publications, in which unitary
representations of this group on $\mathcal{A}(\mathcal{H})$ are
investigated, cf. e.g. \cite{Friedrichs:1953, Berezin:1966,
Ruijsenaars:1978, Ottesen:1995}. It is the aim of this paper to construct a
ray representation of the group $\mathcal{O}_{2}(\mathcal{H}_{\mathbb{R}})$
with the methods of superanalysis as presented in Sects. \ref{Fock} and \ref%
{Weyl}. That amounts to the construction of a representation by continuous
operators $\hat{T}(R),\,R\in\mathcal{O}_{2}(\mathcal{H}_{\mathbb{R}})$,
acting on the module Fock space $\mathcal{A}^{\Lambda}(\mathcal{H})$ with
the following property: $\hat{T}(R)$ is the product $\hat{T}%
(R)=\kappa_{0}\otimes T(R)$ where $T(R)$ is a unitary operator on $\mathcal{A%
}(\mathcal{H})$. Using the creation and annihilation operators (\ref{h10})
and $D_{\xi}=b^{+}(\xi)-b(\xi),\,\xi \in\mathcal{H}_{\Lambda},$ the
canonical anticommutation relations (\ref{can1}) are equivalent to the
commutation relations $D_{\xi}D_{\eta }-D_{\eta}D_{\xi}=-2i\omega(\xi,\eta)$
with the antisymmetric form (\ref{w3}) of the Weyl relations. Extending the
operators (\ref{can3}) to $\mathbb{R}$-linear operators on $\mathcal{H}%
_{\Lambda}$ by the rule
\begin{equation}
R(U,V)\left( \mu\otimes f\right) =\mu\otimes Uf+\mu^{\ast}\otimes Vf^{\ast
},\;\mu\in\Lambda_{1},\,f\in\mathcal{H},  \label{can5}
\end{equation}
we obtain the relation%
\begin{equation}
\omega(R\,\xi,R\,\eta)=\omega(\xi,\eta),\,\xi,\eta\in\mathcal{H}_{\Lambda },
\label{can6}
\end{equation}
which is equivalent to (\ref{can4}). Hence a unitary operator $T(R)$ on $%
\mathcal{A}(\mathcal{H})$ generates the Bogoliubov transformation (\ref{can2}%
) if%
\begin{equation}
\widehat{T}(R)W(\xi)=W(R\,\xi)\widehat{T}(R)  \label{can7}
\end{equation}
is true with $\widehat{T}(R)=\kappa_{0}\otimes T(R)$ for all $\xi \in%
\mathcal{H}_{\Lambda}$.

\section{The orthogonal group\label{orthog}}

\subsection{Definition\label{orthogdef}}

The following proposition is well known:

\begin{proposition}
The transformation (\ref{can3}) $\mathcal{H}_{\mathbb{R}}\ni f\rightarrow
R(U,V)f=Uf+Vf^{\ast}\in\mathcal{H}_{\mathbb{R}}$ is an invertible linear
transformation that preserves the inner product $\left( f\parallel g\right)
_{\mathbb{R}}=\mathrm{Re}\,\left( f\mid g\right) $ of $\mathcal{H}_{\mathbb{R%
}}$ if and only if $U$ and $V$ satisfy the identities%
\begin{align}
UU^{\dagger}+VV^{\dagger} & =I=U^{\dagger}U+V^{T}\bar{V},  \label{orth1a} \\
UV^{T}+VU^{T} & =0=U^{\dagger}V+V^{T}\bar{U}.  \label{orth1b}
\end{align}
\end{proposition}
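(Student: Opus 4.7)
The plan is to repackage the statement as two matrix identities $R^{T}R=I$ and $RR^{T}=I$ for the real adjoint $R^{T}$ of $R=R(U,V)$, and then read off the four claimed identities from an explicit composition rule.

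First I would identify the real transpose of $R(U,V)$, i.e.\ the operator $R^{T}$ satisfying $(Rf\mid g)_{\mathbb{R}}=(f\mid R^{T}g)_{\mathbb{R}}$ for all $f,g\in\mathcal{H}_{\mathbb{R}}$. Expanding $(Rf\mid g)_{\mathbb{R}}=\mathrm{Re}(Uf\mid g)+\mathrm{Re}(Vf^{\ast}\mid g)$, the first summand is $(f\mid U^{\dagger}g)_{\mathbb{R}}$. For the second, the identity $(a\mid b^{\ast})=(b\mid a^{\ast})$ (an immediate consequence of the antiunitarity of $f\mapsto f^{\ast}$) combined with $\overline{V^{\dagger}}=V^{T}$ yields $\mathrm{Re}(Vf^{\ast}\mid g)=\mathrm{Re}(f^{\ast}\mid V^{\dagger}g)=\mathrm{Re}(f\mid V^{T}g^{\ast})$. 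Hence $R(U,V)^{T}=R(U^{\dagger},V^{T})$.

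Next I would derive the composition rule. Using the antilinearity $(U_{2}f+V_{2}f^{\ast})^{\ast}=\bar{V}_{2}f+\bar{U}_{2}f^{\ast}$ a direct expansion gives
\begin{equation*}
R(U_{1},V_{1})\,R(U_{2},V_{2}) \;=\; R\bigl(U_{1}U_{2}+V_{1}\bar{V}_{2},\; U_{1}V_{2}+V_{1}\bar{U}_{2}\bigr),
\end{equation*}
and the identity of $\mathcal{H}_{\mathbb{R}}$ is $R(I,0)$. Specializing to $R^{T}R=R(U^{\dagger},V^{T})R(U,V)=R(I,0)$ gives exactly the second identities in (\ref{orth1a}) and (\ref{orth1b}), namely $U^{\dagger}U+V^{T}\bar{V}=I$ and $U^{\dagger}V+V^{T}\bar{U}=0$. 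Specializing instead to $RR^{T}=R(U,V)R(U^{\dagger},V^{T})=R(I,0)$ gives the first identities $UU^{\dagger}+VV^{\dagger}=I$ and $UV^{T}+VU^{T}=0$, using $\overline{V^{T}}=V^{\dagger}$ and $\overline{U^{\dagger}}=U^{T}$.

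With these two ingredients the proposition is straightforward. For $(\Rightarrow)$, preservation of the real inner product is by definition $R^{T}R=I$, giving the right-hand identities; invertibility then forces $R^{T}=R^{-1}$, whence also $RR^{T}=I$, giving the left-hand identities. For $(\Leftarrow)$, the four identities, via the composition rule, are precisely $R^{T}R=I=RR^{T}$; this shows simultaneously that $R$ is invertible (with inverse $R^{T}$) and that it preserves $(\cdot\mid\cdot)_{\mathbb{R}}$. The only genuinely non-routine step is the computation of $R^{T}$ in the first paragraph — this is where the antiunitary involution and the compatibility relations $A^{\dagger}=\overline{A^{T}}=\overline{A}^{T}$ must be handled carefully; everything else reduces to block-matrix algebra.
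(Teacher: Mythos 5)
Your proof is correct. The paper states this proposition as well known and gives no proof of its own, and your argument --- computing the real transpose $R(U,V)^{T}=R(U^{\dagger},V^{T})$ and reading the four identities off $R^{T}R=I=RR^{T}$ via the composition rule (tacitly using that the $\mathbb{C}$-linear/antilinear split of $R$ determines $U$ and $V$ uniquely) --- is the standard route, reproducing along the way exactly the multiplication rule (\ref{orth2}) and inverse formula (\ref{orth3}) that the paper records immediately after the proposition.
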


These transformations form the group of orthogonal transformations, which
will be denoted by $\mathcal{O}(\mathcal{H}_{\mathbb{R}})$. The
multiplication rule is%
\begin{equation}
R(U_{2},V_{2})R(U_{1},V_{1})=R(U_{2}U_{1}+V_{2}\bar{V}_{1},U_{2}V_{1}+V_{2}%
\bar{U}_{1}).  \label{orth2}
\end{equation}
The identity of the group is $R(I,0)$, and the inverse of $R(U,V)$ is
\begin{equation}
R^{-1}(U,V)=R(U^{\dagger},V^{T}).  \label{orth3}
\end{equation}

In order to derive a unitary representation of the orthogonal group on the
Fock space $\mathcal{A}(\mathcal{H})$ an additional constraint is necessary
if $\dim\mathcal{H}$ is infinite: The operator $V$ has to be a HS operator
\cite{Friedrichs:1953, Shale/Stinespring:1965, BSZ:1992}. The subset of all
transformations $R(U,V)$ with a HS operator $V$ forms a subgroup, which is
denoted as $\mathcal{O}_{2}(\mathcal{H}_{\mathbb{R}})$. In the sequel we
only consider transformations with this restriction, and \textquotedblleft
orthogonal transformation\textquotedblright\ always means a transformation
in $\mathcal{O}_{2}(\mathcal{H}_{\mathbb{R}})$. The relations (\ref{orth1a})
and the HS condition for $V$ imply

\begin{enumerate}
\item $U$ and $U^{\dag }$ are Fredholm operators, i.e. the image spaces $%
\mathrm{ran}\,U$ and $\mathrm{ran}\,U^{\dag }$ closed, and the spaces $\ker
U $ and $\mathrm{\ker }\,U^{\dagger }$ have finite dimension \cite{GGK:1990}.

\item The spaces $\ker U$ and $\ker U^{\dagger }$ have the same dimension%
\begin{equation}
\dim \left( \ker U\right) =\dim \left( \ker U^{\dagger }\right) <\infty .
\label{orth4}
\end{equation}
\end{enumerate}

The second statement follows from the identities (\ref{orth1a}), which imply
$\ker U=\left\{ f\mid V^{T}\bar{V}f=f\right\} $ and $\ker U^{\dagger
}=\left\{ f\mid VV^{\dagger}f=f\right\} $. The positive trace class
operators $VV^{\dagger}$ and $V^{T}\bar{V}$ have coinciding eigenvalues
within the interval $0<\lambda\leq1$, and the corresponding eigenspaces have
the same finite dimension.

A suitable topology for $\mathcal{O}_{2}(\mathcal{H}_{\mathbb{R}})$ is given
by the norm%
\begin{equation}
\left\Vert R(U,V)\right\Vert :=\left\Vert U\right\Vert +\left\Vert
V\right\Vert _{2},  \label{orth5}
\end{equation}
where $\left\Vert U\right\Vert $ is the operator norm of $U\in\mathcal{L}(%
\mathcal{H})$ and $\left\Vert V\right\Vert _{2}$ is the HS norm of $V\in%
\mathcal{L}_{2}(\mathcal{H})$. With this topology $\mathcal{O}_{2}(\mathcal{H%
}_{\mathbb{R}})$ is a topological group, which has two connected components,
cf. Refs. \cite{Araki:1987} \S\ 6(1), \cite{Ottesen:1995} section 2.4.
Thereby the identity component of $\mathcal{O}_{2}(\mathcal{H}_{\mathbb{R}})$
is given by all transformations for which the space $\mathrm{\ker}\,U$ has
an even dimension.

The subset of all operators $R(S,0)$ with a unitary operator $S\in \mathcal{U%
}(\mathcal{H})$ forms the subgroup of all $\mathbb{C}$-linear orthogonal
transformations, which is called $\mathcal{Q}$ in the sequel.

\subsection{Orthogonal transformations $R(U,V)$ with $\ker U\neq\left\{
0\right\} $ \label{noninv}}

If the operator $U$ is not invertible, the Hilbert space can be decomposed
into the orthogonal sums $\mathcal{H}=\mathcal{H}_{0}\oplus\mathcal{H}_{1}=%
\mathcal{F}_{0}\oplus\mathcal{F}_{1}$ of the following closed subspaces $%
\mathcal{H}_{0}=\ker U^{\dagger},\,\mathcal{H}_{1}=\mathrm{ran}\,U=\left(
\ker U^{\dagger}\right) ^{\bot}$ and $\mathcal{F}_{0}=\ker U,\,\mathcal{F}%
_{1}=\mathrm{ran}\,U^{\dagger}=\left( \ker U\right) ^{\bot}$. The projection
operators onto $\mathcal{H}_{k}$ are called $P_{k}$, and the projection
operators onto $\mathcal{F}_{k}$ are called $Q_{k},\,k=0,1$. The identities (%
\ref{orth1a}) imply the relations $P_{0}VV^{\dagger}=VV^{\dagger}P_{0}=P_{0}$
and $\bar{Q}_{0}V^{\dagger}V=V^{\dagger}V\bar{Q}_{0}=\bar{Q}_{0}$ and the
operator norm inequalities $\left\Vert V\bar{Q}_{1}\right\Vert <1$ and $%
\left\Vert V^{\dagger}P_{1}\right\Vert <1$. From the identities (\ref{orth1b}%
) we obtain the inclusions
\begin{equation}
V\mathcal{F}_{1}^{\ast}\subset\mathcal{H}_{1},\;V^{T}\mathcal{H}_{1}^{\ast
}\subset\mathcal{F}_{1}.  \label{orth6}
\end{equation}
Therefore $P_{0}V=V\bar{Q}_{0}$ is an isometric mapping from $\mathcal{F}%
_{0}^{\ast}$ onto $\mathcal{H}_{0}$, and $\bar{Q}_{0}V^{\dagger}=V^{\dagger
}P_{0}$ is the inverse isometry from $\mathcal{H}_{0}$ onto $\mathcal{F}%
_{0}^{\ast}$.

The orthogonal transformation (\ref{can3}) $R(U,V)$ is the sum of two $%
\mathbb{R}$-linear mappings%
\begin{equation}
R(U,V)=R(0,P_{0}V)+R(U,P_{1}V).  \label{orth8}
\end{equation}
The transformation $R(0,P_{0}V)=R(0,V\bar{Q}_{0})$ is an isometric mapping
from $\mathcal{F}_{0\mathbb{R}}$ onto $\mathcal{H}_{0\mathbb{R}}$ with
inverse $R(0,Q_{0}V^{T})$, and $R(U,P_{1}V)=R(P_{1}U,V\bar{Q}_{1})$ is an
isometric mapping from $\mathcal{F}_{1\mathbb{R}}$ onto $\mathcal{H}_{1%
\mathbb{R}}$ with inverse $R(U^{\dagger},Q_{1}V^{T})$, cf. the identity (\ref%
{orth3}).

\subsection{The left coset space $\mathcal{O}_{2}/\mathcal{Q}$ \label{quot}}

The left coset space $\mathcal{O}_{2}/\mathcal{Q}$ is the space of all
orbits \newline
$\left\{ R(U,V)R(S,0)=R(US,V\bar{S})\mid S\in\mathcal{U}(\mathcal{H}%
)\right\} $ under right action of the subgroup $\mathcal{Q}$. This space can
serve to identify states in the Fock space representation of the orthogonal
group. If $U$ is invertible the operator $U^{\dagger-1}V^{T}=U^{%
\dagger-1}SS^{\dagger}V^{T}=(US)^{\dagger-1}(V\bar{S})^{T}$ is invariant
against the right action of this subgroup. From the identity (\ref{orth1b})
we obtain $U^{\dagger-1}V^{T}+V\bar{U}^{-1}=U^{\dagger -1}\left( V^{T}\bar{U}%
+U^{\dagger}V\right) \bar{U}^{-1}=0$. Hence%
\begin{equation}
X:=V\bar{U}^{-1}=-U^{\dagger-1}V^{T}=-X^{T}  \label{orth10}
\end{equation}
is a skew symmetric HS operator, which characterizes the orbit through $%
R(U,V)$.

\begin{remark}
\label{lift}Knowing the operator (\ref{orth10}) one can easily obtain a
transformation on the orbit. The operators $L=\left( I+XX^{\dagger}\right)
^{-\frac{1}{2}}\geq0$ and $W=X\left( I+X^{\dagger}X\right) ^{-\frac{1}{2}%
}=-W^{T}$ satisfy the conditions (\ref{orth1a}) and (\ref{orth1b}) so that $%
R(L,W)$ is an orthogonal transformation. The formula (\ref{orth10}) $W\bar {L%
}^{-1}=X$ reproduces the input $X$. Hence $R(L,W)$ is a point on the orbit
specified by $X$.
\end{remark}

If $\ker\,U\neq0$ the space $\mathcal{H}_{0}=\ker\,U^{\dagger}=\ker
\,(US)^{\dagger},\,S$ unitary, is invariant against the right action of the
group $\mathcal{Q}$. This space can therefore be used as a coordinate for an
orbit. But the operator (\ref{orth10}) is no longer defined. It is
convenient to introduce the notion of a \textit{generalized inverse operator}%
. Let $A\in\mathcal{L}(\mathcal{H})$ be an operator with closed range. Then
the generalized inverse $A^{(-1)}\in\mathcal{L}(\mathcal{H})$ of $A$ is
defined as, cf. e. g. \cite{Groetsch:1977, Ben-Israel:2003},
\begin{equation}
A^{(-1)}f:=A^{-1}Pf=\left\{
\begin{array}{l}
A^{-1}f\quad\mathrm{if}\;f\in\mathrm{ran}\,A, \\
0\quad\mathrm{if}\;f\in\ker A^{\dagger}=\left( \mathrm{ran}A\right) ^{\perp
},%
\end{array}
\right.  \label{orth12}
\end{equation}
where $P$ is the projector onto $\mathrm{ran}\,A$. The operator $A^{(-1)}$
satisfies the identities $AA^{(-1)}=P$ and $A^{(-1)}A=Q$, thereby $Q$ is the
projector onto $\mathrm{ran}A^{\dag}$. The relations $\left( A^{\dagger
}\right) ^{(-1)}=\left( A^{(-1)}\right) ^{\dagger},\;\left( \bar {A}\right)
^{(-1)}=\overline{\left( A^{(-1)}\right) }$ and $\left( A^{T}\right)
^{(-1)}=\left( A^{(-1)}\right) ^{T}$ are valid. In the sequel we write $%
A^{\dagger(-1)}$ instead of $\left( A^{\dagger}\right) ^{(-1)}$. If $S$ is a
unitary operator, then the operators $SA$ and $AS$ have also a closed range,
and the following identities are true%
\begin{equation}
\left( SA\right) ^{(-1)}=A^{(-1)}S^{\dag},\;\left( AS\right)
^{(-1)}=S^{\dag}A^{(-1)}.  \label{orth13}
\end{equation}
Using the notation of the definition (\ref{orth12}) the generalization of (%
\ref{orth10}) is the HS operator%
\begin{equation}
X:=V\bar{U}^{(-1)}=V\bar{U}^{-1}\bar{P}_{1}\overset{(\ref{orth6})}{=}P_{1}V%
\bar{U}^{-1}\bar{P}_{1}=-V^{T}U^{\dagger(-1)}.  \label{orth11}
\end{equation}
The skew symmetry is again a consequence of (\ref{orth1b}): $%
X+X^{T}=P_{1}\left( V\bar{U}^{-1}+U^{\dagger-1}V^{T}\right) \bar{P}%
_{1}=U^{\dagger(-1)}\left( U^{\dagger}V+V^{T}\bar{U}\right) \bar{U}^{(-1)}%
\overset{(\ref{orth1b})}{=}0$. The operator (\ref{orth11}) is invariant
against right action of the subgroup $\mathcal{Q}:V\bar{U}^{(-1)}=V\bar {S}%
S^{T}\bar{U}^{(-1)}\overset{(\ref{orth13})}{=}(V\bar{S})\overline{\left(
US\right) }^{(-1)}$, and it can serve as second coordinate for an orbit.

\section{Representations\label{rep}}

This section presents the central result of this paper: the construction of
a representation of $\mathcal{O}_{2}(\mathcal{H}_{\mathbb{R}})$. We are
looking for operators $T(R)$, which have the properties%
\begin{equation}
R\in\mathcal{O}_{2}(\mathcal{H}_{\mathbb{R}})\rightarrow T\left( R\right) \;%
\mathrm{unitary\,operator\,on}\;\mathcal{A}(\mathcal{H}),\;T(id)=I_{\mathcal{%
A}(\mathcal{H})},  \label{r1}
\end{equation}
and which implement canonical transformations, cf. (\ref{can2a}),%
\begin{equation}
T(R)\Delta(f)T^{\dag}(R)=\Delta(Rf),\;f\in\mathcal{H}_{\mathbb{R}}\text{.}
\label{r2}
\end{equation}
Since the CAR algebra generated by $\left\{ \Delta(f)\mid f\in\mathcal{H}_{%
\mathbb{R}}\right\} $ is irreducible within $\mathcal{A}(\mathcal{H})$, the
unitary operators $T(R)$ are determined by (\ref{r1}) up to a phase factor.
Moreover, the group laws of $\mathcal{O}_{2}(\mathcal{H}_{\mathbb{R}})$ and
the relation (\ref{r1}) imply $T(R_{2})T(R_{1})\Delta(f)T^{\dag}(R_{1})T^{%
\dag}(R_{2})=T(R_{2})\Delta(R_{1}f)T^{\dag}(R_{2})=\Delta(R_{2}R_{1}f)$, and
the product rule
\begin{equation}
T(R_{2})T(R_{1})=\chi(R_{2},R_{1})T(R_{2}R_{1})\;\mathrm{with}\;\chi
(R_{2},R_{1})\in\mathbb{C},\,\left\vert \chi(R_{2},R_{1})\right\vert =1
\label{r3}
\end{equation}
follows. Hence $R\in\mathcal{O}_{2}(\mathcal{H}_{\mathbb{R}})\rightarrow
T\left( R\right) $ is a unitary ray representation. The multiplier $\chi$
satisfies the cocycle identity $\chi(R_{3},R_{2})\chi(R_{3}R_{2},R_{1})=%
\chi(R_{3},R_{2}R_{1})\chi(R_{2},R_{1})$ as consequence the associativity of
the operator product.

A simple class of canonical transformations arise if $R(U,V)$ is $\mathbb{C}$%
-linear, i.e. for $U=S\in\mathcal{U}(\mathcal{H})$ and $V=0$. Then the
condition (\ref{can4}) is fulfilled and $T(S,0)=\Gamma(S)$ is a unitary
operator on $\mathcal{A}(\mathcal{H})$ that leads to the transformation $%
Ta^{\pm}(f)T^{+}=a^{\pm}(Sf)$ of the creation/annihilation operators in
agreement with the rule (\ref{can2}). The group laws and (\ref{r3}) imply
the relations%
\begin{equation}
\Gamma(S)T(U,V)=\chi\,T(SU,SV),\;T(U,V)\Gamma(S)=\chi\,T(US,V\bar {S})
\label{r4}
\end{equation}
with phase factors $\chi$.

In Sec. \ref{invertible} the case of orthogonal transformations $R(U,V)$
with an invertible operator $U$ is investigated. For these transformations
superanalytic methods are adequate. The construction is first given for
operators $\hat{T}\left( R\right) =\hat{T}\left[ U,V\right] $ on the module
space $\mathcal{A}^{\Lambda}(\mathcal{H})$. These operators have the
structure $\hat{T}\left( R\right) =\kappa_{0}\otimes T\left( R\right) $ with
$T\left( R\right) \in\mathcal{L}\left( \mathcal{A}(\mathcal{H})\right) $,
and the pull-back to operators $T\left( R\right) $ on the Fock space $%
\mathcal{A}(\mathcal{H})$ can be easily performed. The case of orthogonal
transformations with operators $U$, which are not invertible, is treated in
Sec. \ref{nonInv}. Then the construction given in Sec. \ref{invertible} is
only possible on the subspace $\mathcal{A}^{\Lambda }(\mathcal{F}_{1}),\,%
\mathcal{F}_{1}=\mathrm{ran}\,U^{\dagger}$, and a separate investigation
concerning the finite dimensional space $\mathcal{A}(\mathcal{F}_{0}),\,%
\mathcal{F}_{0}=\ker U$, is necessary.

\subsection{Transformations $R(U,V)$ with invertible $U$ \label{invertible}}

\subsubsection{The ansatz\label{ansatz}}

Let $R(U,V)$ be an orthogonal transformation of the group $\mathcal{O}_{2}(%
\mathcal{H}_{\mathbb{R}})$ with an invertible operator $U$. In
correspondence to the case of bosons -- cf. Sec. 5.1. in \cite%
{Kupsch/Banerjee:2006} -- the representation $\hat{T}(R)=\hat{T}\left[ U,V%
\right] $ of this transformation is now defined on the set of exponential
vectors $\exp\xi,\,\xi\in\mathcal{H}_{\Lambda}$, by the ansatz%
\begin{equation}
\hat{T}\left[ U,V\right] \exp\xi:=c_{X}\,\exp\left( -\frac{1}{2}\left\langle
\xi\parallel V^{\dagger}U^{\dagger-1}\xi\right\rangle \right)
\Psi(X,U^{\dagger-1}\xi)\in\mathcal{C}^{\Lambda}(\mathcal{H}_{\Lambda })
\label{r11}
\end{equation}
with the skew symmetric HS operator (\ref{orth10}) $X=V\bar{U}%
^{-1}=-U^{\dagger-1}V^{T}$ and the normalization constant
\begin{equation}
c_{X}=\left( \det\left( I+X^{\dagger}X\right) \right)
^{-1/4}=c_{X^{\dagger}}.  \label{r18}
\end{equation}
The tensor $\Psi$ is an ultracoherent vector (\ref{h12}). The operator $%
Y:=-V^{\dagger}U^{\dagger-1}=\bar{U}^{-1}\bar{V}$ is a skew symmetric HS
operator, and the quadratic form $\left\langle \xi\parallel Y\,\xi
\right\rangle $ is an element of $\Lambda_{2}$. The exponential series $%
\exp\left( \frac{1}{2}\left\langle \xi\parallel Y\,\xi\right\rangle \right) $
converges within $\Lambda_{\bar{0}}$, cf. Appendix \ref{exponentials}. The
right side of (\ref{r11}) is therefore a well defined element of $\mathcal{C}%
^{\Lambda}(\mathcal{H}_{\Lambda})$. The exponential vector $\exp\xi$ and (%
\ref{r11}) are entire analytic functions in the variable $\xi\in\mathcal{H}%
_{\Lambda}$. Using (\ref{r11}) and (\ref{h14}) we derive the identity%
\begin{equation}
\begin{array}{l}
\left( \exp\xi\mid\hat{T}(R)\exp\eta\right) =c_{X}\,\left( \exp\xi\mid \Psi(V%
\bar{U}^{-1},U^{\dag-1}\eta\right) \exp\left( -\frac{1}{2}\left\langle
\eta\parallel V^{\dag}U^{\dag-1}\eta\right\rangle \right) = \\
c_{X}\exp\left( -\frac{1}{2}\left\langle \eta\parallel
V^{\dag}U^{\dag-1}\eta\right\rangle \right) \exp\left( U^{-1}\xi\mid\eta-%
\frac{1}{2}V^{T}\xi^{\ast}\right) =\left( \hat{T}(R^{-1})\exp\xi\mid\exp\eta%
\right)%
\end{array}
\label{r13}
\end{equation}
with $R^{-1}(U,V)=R(U^{\dagger},V^{T})$, cf. (\ref{orth3}). Since $\xi$ and $%
\eta$ are arbitrary vectors in $\mathcal{H}_{\Lambda}$, the identity (\ref%
{r13}) implies that (\ref{r11}) has a $\Lambda$-linear extension onto $%
\mathcal{C}^{\Lambda}(\mathcal{H}_{\Lambda})$ that satisfies $\hat {T}%
(R)\left( \lambda\exp\xi\right) =\lambda\left( \hat{T}(R)\exp\xi\right)
,\;\lambda\in\Lambda$. Moreover, we obtain the operator identity%
\begin{equation}
\hat{T}^{+}(R)\equiv\left( \hat{T}(R)\right) ^{+}=\hat{T}(R^{-1})
\label{r14}
\end{equation}
on the space $\mathcal{C}^{\Lambda}(\mathcal{H}_{\Lambda})$. Using the
formula (\ref{h13}) the $\Lambda$-inner product $\left( \hat{T}(R)\exp\xi\mid%
\hat {T}(R)\exp\xi\right) $ is calculated with the result $\left( \hat{T}%
(R)\exp\xi\mid\hat{T}(R)\exp\xi\right) =\exp\left( \xi\mid\xi\right) $ for
all $R\in\mathcal{O}_{2}(\mathcal{H}_{\mathbb{R}})$ and $\xi\in\mathcal{H}%
_{\Lambda}$. By analyticity arguments this identity implies $\left( \hat {T}%
(R)\exp\xi\mid\hat{T}(R)\exp\eta\right) =\left( \exp\xi\mid\hat{T}^{+}(R)%
\hat{T}(R)\exp\eta\right) =\left( \exp\xi\mid\exp\eta\right) $ for all $%
\xi,\eta\in\mathcal{H}_{\Lambda}$, and we obtain the operator identity
\begin{equation}
\hat{T}^{+}(R)\hat{T}(R)=id  \label{r16}
\end{equation}
on $\mathcal{C}^{\Lambda}(\mathcal{H}_{\Lambda})$. The relations (\ref{r14})
and (\ref{r16}), which are valid for all $R\in\mathcal{O}_{2}(\mathcal{H}_{%
\mathbb{R}})$, imply that $\hat{T}(R)$ is an invertible operator on $%
\mathcal{C}^{\Lambda}(\mathcal{H})$ with inverse $\left( \hat{T}(R)\right)
^{-1}=\hat{T}^{+}(R)$.

To obtain the pull-back of the operators $\hat{T}(R)$ to the Fock space $%
\mathcal{A}(\mathcal{H})$ we observe that the ansatz (\ref{r11}) has the
following structure: $\hat{T}(R)$ does not operate on the $\Lambda$ factors
of $\exp\xi$, hence $\hat{T}(R)$ is the product
\begin{equation}
\hat{T}(R)=\kappa_{0}\otimes T(R),  \label{r15}
\end{equation}
where $T(R)$ is an operator on $\mathcal{A}(\mathcal{H})$. The explicit form
of $T(R)$ can be recovered from (\ref{r11}) by contraction of the $\Lambda $%
-factors. For any $\lambda\in\Lambda$ we have $T(R)\,\left( \lambda\mid
\exp\xi\right) _{\Lambda}=\left( \lambda\mid\hat{T}(R)\exp\xi\right)
_{\Lambda}$. Since the $\mathbb{C}$-linear span of the set $\left\{ \left(
\lambda\mid\exp\xi\right) _{\Lambda}\right\} $ with $\lambda\in\Lambda$ and $%
\xi\in\mathcal{H}_{\Lambda}$ is dense in $\mathcal{A}(\mathcal{H})$, the
operator $T(R)$ is determined by (\ref{r11}). The factorization (\ref{r15})
and the identities (\ref{r14}) and (\ref{r16}) imply that the operator $T(R)$
has a unique extension to a unitary operator on $\mathcal{A}(\mathcal{H})$.

\begin{remark}
Starting from the expansion $\xi =\sum_{m\in \mathbb{N}}\kappa _{m}\otimes
f_{m}\in \mathcal{H}_{\Lambda }$ with an ON basis $\left\{ \kappa
_{m}\right\} $ of $\Lambda _{1}$ and vectors $f_{m}\in \mathcal{H}$, one
obtains a more explicit version of the ansatz (\ref{r11}), which allows to
derive $T(R)\left( f_{1}\wedge ...\wedge f_{n}\right) $ for arbitrary
vectors $f_{j}\in \mathcal{H},\,j=1,...,n,\,n\in \mathbb{N}$. The
calculation is given in the Appendix \ref{pull}.
\end{remark}

The formula (\ref{r11}) applies to transformations $R(U,0)$ with a unitary
operator $U=S\in\mathcal{U}(\mathcal{H})$. For these transformations it has
the simple form $\hat{T}\left[ S,0\right] \exp\xi=\exp(S\xi)$, and $\hat {T}%
\left[ S,0\right] $ is identified with $\hat{T}\left[ S,0\right] =\hat{\Gamma%
}(S):=\kappa_{0}\otimes\Gamma(S).$The unitary transformation $\Gamma(S)$ is
a canonical transformation on $\mathcal{A}(\mathcal{H})$ as already stated
in the introductory part of Sec. \ref{rep}. The operators (\ref{r11}) satisfy
the relations
\begin{equation}
\hat{\Gamma}(S)\hat{T}\left[ U,V\right] =\hat{T}\left[ SU,SV\right] \;%
\mathrm{and}\;\hat{T}\left[ U,V\right] \hat{\Gamma}(S)=\hat{T}\left[ US,V%
\bar{S}\right] ,  \label{r5}
\end{equation}
which imply the corresponding identities for $T\left[ U,V\right] $.

If $U$ is not invertible, the ansatz (\ref{r11}) is not defined for vectors $%
\xi \in \mathcal{H}_{\Lambda }=\Lambda _{1}\widehat{\otimes }\mathcal{H}$
with a component in $\Lambda _{1}\widehat{\otimes }\mathcal{F}_{0},\mathcal{F%
}_{0}=\ker U$. But we can obtain a mapping $\hat{T}_{1}=\hat{T}_{1}\left[
U,P_{1}V\right] =\hat{T}_{1}(R_{1})$ from $\mathcal{C}^{\Lambda }(\Lambda
_{1}\widehat{\otimes }\mathcal{F}_{1})=\hat{\Gamma}(Q_{1})\,\mathcal{C}%
^{\Lambda }(\mathcal{H}_{\Lambda })$ onto $\mathcal{C}^{\Lambda }(\Lambda
_{1}\widehat{\otimes }\mathcal{H}_{1})=\hat{\Gamma}(P_{1})\,\mathcal{C}%
^{\Lambda }(\mathcal{H}_{\Lambda })$ that represents the transformation $%
R_{1}=R(U,P_{1}V)$ of Eq. (\ref{orth8}).\footnote{%
For the definition of the spaces and the operators see Sec. \ref{noninv}.}
On the restricted set of coherent vectors $\exp \xi $ with $\xi \in \Lambda
_{1}\widehat{\otimes }\mathcal{F}_{1},\,\mathcal{F}_{1}=\mathrm{ran}%
\,U^{\dagger }$, we define the operator
\begin{equation}
\hat{T}_{1}\left[ U,P_{1}V\right] \exp \xi :=c_{X}\,\Psi (X,U^{\dagger
(-1)}\xi )\exp \left( -\frac{1}{2}\left\langle \xi \parallel V^{\dagger
}U^{\dagger (-1)}\xi \right\rangle \right) \in \mathcal{C}^{\Lambda
}(\Lambda _{1}\widehat{\otimes }\mathcal{H}_{1})  \label{r19}
\end{equation}%
with the skew symmetric operator (\ref{orth11}) $X$ and the normalization
constant (\ref{r18}). The right side of (\ref{r19}) depends only on the
operators $U=P_{1}U$ and $P_{1}V.$ Arguments as used for (\ref{r11}) yield
that the ansatz (\ref{r19}) can be extended to an invertible mapping from $%
\mathcal{A}^{\Lambda }(\mathcal{F}_{1})$ onto $\mathcal{A}^{\Lambda }(%
\mathcal{H}_{1})$.

There is another approach to the operator $\hat{T}_{1}$. Let $U_{0}$ be a
partial isometry from $\mathcal{F}_{0}$ onto $\mathcal{H}_{0}$, i.e. the
operator $U_{0}$ satisfies $U_{0}U_{0}^{\dagger }=P_{0}$ and $U_{0}^{\dagger
}U_{0}=Q_{0}$. Then the transformation $R(U+U_{0},P_{1}V)$ is an orthogonal
transformation with invertible first argument $(U+$ $%
U_{0})^{-1}=U^{(-1)}+U_{0}^{\dagger }$, and the operator $\hat{T}\left[
U+U_{0},P_{1}V\right] $ can be defined by the ansatz (\ref{r11}). The
mapping (\ref{r19}) is the restriction of the operator $\hat{T}\left[
U+U_{0},P_{1}V\right] $ to the space $\mathcal{A}^{\Lambda }(\mathcal{F}%
_{1})=\hat{\Gamma}(Q_{1})\mathcal{A}^{\Lambda }(\mathcal{H})$%
\begin{equation}
\hat{T}_{1}\left[ U,P_{1}V\right] \,\hat{\Gamma}(Q_{1})=\hat{T}\left[
U+U_{0},P_{1}V\right] \,\hat{\Gamma}(Q_{1}).  \label{r25}
\end{equation}%
If $S$ is a unitary operator on $\mathcal{H}$, we infer from (\ref{r19}), or
from (\ref{r5}) and (\ref{r25}), the following identity, which is needed in
Sec. \ref{general},%
\begin{equation}
\hat{T}_{1}\left[ U,P_{1}V\right] \,\hat{\Gamma}(Q_{1})=\hat{T}_{1}\left[
US,P_{1}V\bar{S}\right] \hat{\Gamma}(S^{\dagger }Q_{1})\mathrm{.}
\label{r20}
\end{equation}%
As a consequence of (\ref{r15}) and (\ref{r25}) the operator $\hat{T}_{1}$
has the structure $\hat{T}_{1}(R_{1})=\kappa _{0}\otimes T_{1}(R_{1})$,
where $T_{1}$ is an isometric surjective mapping from $\mathcal{A}(\mathcal{F%
}_{1})$ onto $\mathcal{A}(\mathcal{H}_{1})$.

\subsubsection{The intertwining relation with Weyl operators\label{WeylBogol}%
}

The ansatz (\ref{r11}) and the relation (\ref{w1}) for the Weyl operators
yield the identity%
\begin{eqnarray}
&&\hat{T}(R)W(\xi )\exp \eta =c_{X}\,\Psi \left( X,\,U^{\dagger -1}(\xi
+\eta )\right) \times  \notag \\
&&\exp \left( -\left( \xi \mid \eta \right) -\frac{1}{2}\left( \xi \mid \xi
\right) -\frac{1}{2}\left\langle \xi +\eta \parallel V^{\dagger }U^{\dagger
-1}(\xi +\eta )\right\rangle \right)  \label{r24}
\end{eqnarray}%
with the operator (\ref{orth10}) $X$, the normalization constant (\ref{r18}%
), and supervectors $\xi ,\eta \in \mathcal{H}_{\Lambda }$. On the other
hand $W(R\xi )\hat{T}(R)\exp \eta $ is calculated using (\ref{w7}), (\ref%
{orth1a}), (\ref{orth1b}) and (\ref{r11}) with the same result. The
intertwining relation
\begin{equation}
\hat{T}(R)W(\xi )=W(R\xi )\hat{T}(R)  \label{r21}
\end{equation}%
is therefore true on the space $\mathcal{C}(\mathcal{H}_{\Lambda })$ for all
$R(U,V)\in \mathcal{O}_{2}(\mathcal{H}_{\mathbb{R}})$ with invertible $U$
and for all supervectors $\xi \in \mathcal{H}_{\Lambda }$. Then Lemma \ref%
{regOp1} implies that this identity is true on the module Fock space $%
\mathcal{A}^{\Lambda }(\mathcal{H})$. Hence the operators (\ref{r11})
implement Bogoliubov transformations as defined in Sec. \ref{canonical}.

If $U$ is not invertible, we can derive a partial intertwining relation for
the mapping (\ref{r19}). The orthogonal transformation $R$ is restricted to
the isometric mapping $R_{1}:=R(U,P_{1}V)$ from $\mathcal{F}_{1\mathbb{R}}$
onto $\mathcal{H}_{1\mathbb{R}}$. If $\eta $ is a supervector in $\Lambda
_{1}\widehat{\otimes }\mathcal{F}_{1}$, the tensor $R_{1}\eta =U\eta
+P_{1}V\eta ^{\ast }$ is a supervector in $\Lambda _{1}\widehat{\otimes }%
\mathcal{H}_{1}$. Then the calculations given above go through with
supervectors $\xi ,\eta \in \Lambda _{1}\widehat{\otimes }\mathcal{F}_{1}$
and the skew symmetric operator (\ref{orth11}) $X$. The resulting identity $%
\hat{T}(R_{1})W(\xi )\exp \eta =W(R_{1}\xi )\hat{T}(R_{1})\exp \eta $
implies the relation%
\begin{equation}
\hat{T}_{1}(R_{1})W(\xi )\hat{\Gamma}(Q_{1})=W(R_{1}\xi )\hat{T}_{1}(R_{1})%
\hat{\Gamma}(Q_{1}).  \label{r22}
\end{equation}

\subsection{Transformations $R(U,V)$ with $\ker U\neq\left\{ 0\right\} $%
\label{nonInv}}

If $\ker U=\mathcal{F}_{0}\neq\left\{ 0\right\} $ it is convenient to split
the orthogonal transformation $R(U,V)$ into the sum (\ref{orth8}). At the
end of Sec. \ref{ansatz} we have introduced the isometric mapping $T_{1}$
from $\mathcal{A}(\mathcal{F}_{1})$ onto $\mathcal{A}(\mathcal{H}_{1})$ as
representation of the partial isometry $R_{1}=R(U,P_{1}V)$, the second term
in the sum (\ref{orth8}). In Sec. \ref{dual} we construct an isometric
mapping $T_{0}$ from $\mathcal{A}(\mathcal{F}_{0})$ onto $\mathcal{A}(%
\mathcal{H}_{0})$ that is a representation of the partial isometry $%
R_{0}=R(0,P_{0}V)$, the first term in the sum (\ref{orth8}). In Sec. \ref%
{general} we derive that the combined action of these operators is a
representation of the orthogonal transformation (\ref{orth8}) on the Fock
space $\mathcal{A}(\mathcal{H})$.

\subsubsection{A duality mapping between finite dimensional spaces\label%
{dual}}

We start with a linear partial isometry $J$ from $\mathcal{F}_{0}^{\ast }$
onto $\mathcal{H}_{0}$, i.e. $J$ satisfies the identities $J^{\dagger }J=%
\bar{Q}_{0}$ and $JJ^{\dagger }=P_{0}$. We choose an ON basis $\left\{
e_{m},\,m\in \mathbf{M}\right\} ,\,\mathbf{M}=\left\{ 1,...,n\right\} $ of $%
\mathcal{H}_{0}$. Then the vectors $\left\{ f_{m}=-J^{T}\,e_{m}^{\ast }\mid
m\in \mathbf{M}\right\} $ form an ON basis of $\mathcal{F}_{0}$. We define
the linear mapping $T_{0}\left[ J\right] $ by its action on the basis $%
\left\{ f_{\mathbf{K}},\,\mathbf{K}\subset \mathbf{M}\right\} $ of $\mathcal{%
A}(\mathcal{F}_{0})$%
\begin{equation}
T_{0}\left[ J\right] \,f_{\mathbf{K}}:=(-1)^{\tau (\mathbf{K},\mathbf{N})}e_{%
\mathbf{N\backslash K}},\;\mathbf{K}\subset \mathbf{N}.  \label{r31}
\end{equation}%
The number $\tau (\mathbf{K},\mathbf{N})$ has been defined at the end of
Sec. \ref{hilbert}. The linear extension of (\ref{r31}) is obviously an
isometric linear isomorphism $T_{0}\left[ J\right] $ between $\mathcal{A}(%
\mathcal{F}_{0})$ and $\mathcal{A}(\mathcal{H}_{0})$. The operator $T_{0}$
has the following intertwining relations with the operator $\Delta
(h)=a^{+}(h)-a^{-}(h)$%
\begin{equation}
T_{0}\left[ J\right] \Delta (h)=\Delta (R_{0}h)T_{0}\left[ J\right] \quad
\mathrm{if}\;h\in \mathcal{F}_{0}.  \label{r33}
\end{equation}%
Thereby $R_{0}:=R(0,J)$ is the $\mathbb{R}$-linear operator (\ref{can3}) $%
R(0,J)h=Jh^{\ast }$.The proof of these relations is given in Appendix \ref%
{reflections}. The irreducibility of the CAR algebras of the spaces $%
\mathcal{A}(\mathcal{F}_{0})$ and $\mathcal{A}(\mathcal{H}_{0})$ implies
that the operator $T_{0}$ is unique except for a phase factor. Hence the
constructions of the operator (\ref{r31}) based on different choices of the
basis $\left\{ e_{j}\right\} $ can disagree only in a multiplicative phase
factor.

\begin{remark}
In Ref. \cite{Ruijsenaars:1978} the mapping (\ref{r31}) has been constructed
using creation and annihilation operators. If $\mathcal{F}_{0}=\mathcal{H}%
_{0}=\mathcal{H}_{0}^{\ast}$ -- e. g. in the case of a real self-adjoint
operator $U$ -- the mapping (\ref{r31}) is a modified Hodge star operator
within $\mathcal{A}(\mathcal{H}_{0})$.
\end{remark}

The ambiguity of $T_{0}\left[ J\right] $ is an overall phase factor, which
can be fixed at the vacuum $f_{\emptyset}$ of the space $\mathcal{F}_{0}$.
With the labels $T_{0}\left[ J,\hat{e}\right] $, where $\hat{e}=T_{0}\left[
J,\hat{e}\right] \,f_{\emptyset}\in\mathcal{H}_{0}^{\wedge n}$ is the image
of the vacuum vector, the mapping $T_{0}\left[ J,\hat{e}\right] $ is
uniquely determined. With this phase convention the relation%
\begin{equation}
T_{0}\left[ J,\hat{e}\right] \Gamma(S)=T_{0}\left[ J\bar{S},\hat{e}\right]
\label{r23}
\end{equation}
is valid for all unitary operators $S\in\mathcal{U}(\mathcal{H})$.

The mapping $T_{0}\left[ J\right] $ has the natural extension $\hat{T}_{0}%
\left[ J\right] =\kappa_{0}\otimes T_{0}\left[ J\right] $ to an invertible
mapping from $\Lambda\widehat{\otimes}\mathcal{A}(\mathcal{F}_{0})$ onto $%
\Lambda\widehat{\otimes}\mathcal{A}(\mathcal{H}_{0})$. The intertwining
relation of this operator $\hat{T}_{0}(R_{0})=\hat{T}_{0}\left[ J\right] $
with the Weyl operator follows from (\ref{r33}) as%
\begin{equation}
\hat{T}_{0}(R_{0})W(\xi)\hat{\Gamma}(Q_{0})=W(R_{0}\xi)\hat{T}_{0}(R_{0})%
\hat{\Gamma}(Q_{0}).  \label{r34}
\end{equation}
Thereby $\hat{\Gamma}(Q_{0})$ is the projection operator onto the space $%
\Lambda\widehat{\otimes}\mathcal{A}(\mathcal{F}_{0})\subset\mathcal{A}%
^{\Lambda}(\mathcal{H})$.

There is a structural difference between operators $\hat{T}_{0}\left[ J%
\right] $ with an even dimension of $\mathcal{H}_{0}=\mathrm{ran}\,J\,(=\ker
U^{\dag})$ and operators with an odd dimension of this space. If $\dim%
\mathcal{H}_{0}$ is even, the operator $\hat{T}_{0}$ maps tensors of parity $%
\pi$ onto tensors with the same parity. If $\dim\mathcal{H}_{0}$ is odd, the
operator $\hat{T}_{0}$ changes the parity $\pi$ into $\pi+1\,mod\,2$. This
structural difference of $\hat{T}_{0}\left[ J\right] $ reflects the group
theoretical difference between the orthogonal transformations $R(U,V)$ with $%
\dim\left( \ker U\right) $ even or odd. If $\dim\left( \ker U\right) $ is
even (odd), the orthogonal transformation $R(U,V)$ belongs (does not belong)
to the connectivity component of the group identity, cf. \S\ 6 of \cite%
{Araki:1970}.

\subsubsection{The general case\label{general}}

The canonical transformation $\hat{T}(R)$ is defined on $\mathcal{A}%
^{\Lambda }(\mathcal{H})$ combining the operators $\hat{T}_{0}$ and (\ref%
{r19}) $\hat{T}_{1}$. Thereby the operator $J$ is identified with $P_{0}V=V%
\bar{Q}_{0}$. We have to take into account that $\hat{T}_{0}$ interchanges
the parity of states, if $n=\dim \mathcal{F}_{0}=\dim \mathcal{H}_{0}$ is an
odd number. The mapping $\hat{T}(R)=\hat{T}\left[ U,V\right] $ is now
defined on the product $\Xi =\Xi _{0}\circ \Xi _{1}\in \mathcal{A}^{\Lambda
}(\mathcal{H})$ of tensors $\Xi _{k}\in \mathcal{A}^{\Lambda }(\mathcal{F}%
_{k})=\hat{\Gamma}(Q_{k})\mathcal{A}^{\Lambda }(\mathcal{H}),\,k=0,1$, as%
\begin{equation}
\hat{T}(R)\left( \Xi _{0}\circ \Xi _{1}\right) =\left( \hat{T}_{0}(R_{0})\Xi
_{0}\right) \circ \left( \hat{T}_{1}(R_{1})\hat{\Gamma}((-1)^{n}Q_{1})\Xi
_{1}\right)  \label{r35}
\end{equation}%
with $n=\dim \mathcal{F}_{0}=\dim \mathcal{H}_{0}$. The spaces $\mathcal{F}%
_{k}$ and $\mathcal{H}_{k},\,k=0,1$, and their projection operators are
defined in Sec. \ref{noninv}. The mapping (\ref{r35}) is the $\Lambda $%
-extension of the operator $T(R)$ on $\mathcal{A}(\mathcal{H})$ that is
constructed from $T_{0}$ and $T_{1}$ by
\begin{equation}
T(R)\left( F_{0}\wedge F_{1}\right) =\left( T_{0}(R_{0})\,F_{0}\right)
\wedge \left( T_{1}(R_{1})\Gamma ((-1)^{n}Q_{1})\,F_{1}\right)  \label{r36}
\end{equation}%
with tensors $F_{k}\in \mathcal{A}(\mathcal{F}_{k}),k=0,1$. From the Sects. %
\ref{invertible} and \ref{dual} we know that the operators $T_{k},\,k=0,1$,
are surjective isometric mappings from $\mathcal{A}(\mathcal{F}_{k})$ onto $%
\mathcal{A}(\mathcal{H}_{k}).$ The operator $\Gamma ((-1)^{n}Q_{1})$ is an
invertible isometry on $\mathcal{A}(\mathcal{F}_{1})$. Hence the linear
extension of (\ref{r36}) is a unitary operator on $\mathcal{A}(\mathcal{H})$%
. If we fix the phase of the operator $T_{0}(R_{0})=T_{0}\left[ P_{0}V,\hat{e%
}\right] $ as done in Sec. \ref{dual}, the definition (\ref{r35}) and the
relations (\ref{r20}) and (\ref{r23}) imply the identity%
\begin{equation}
T\left[ U,V\right] \Gamma (S)=T\left[ US,V\bar{S}\right]  \label{r39}
\end{equation}%
without additional phase factor for all $S\in \mathcal{U}(\mathcal{H})$.

In the last step we give a proof that the operators (\ref{r36}) satisfy the
intertwining relation (\ref{can7}) with the Weyl operators. Let $\Xi=\Xi
_{0}\circ\Xi_{1}\in\mathcal{A}^{\Lambda}(\mathcal{H})$ be the product of
tensors $\Xi_{k}\in\mathcal{A}^{\Lambda}(\mathcal{F}_{k}),\,k=0,1$, where $%
\Xi_{0}$ has the parity $\pi(\Xi_{0})=p$. Then the identities (\ref{w10})
and (\ref{r35}) imply $\hat{T}(R)W(\xi)\left( \Xi_{0}\circ\Xi_{1}\right)
=\left( \hat{T}_{0}(R_{0})W(Q_{0}\xi)\Xi_{0}\right) \circ\left( \hat{T}%
_{1}(R_{1})\hat{\Gamma}((-1)^{n}Q_{1})W((-1)^{p}Q_{1}\xi)\Xi_{1}\right) $.
From (\ref{w8}) we know $\hat{\Gamma}((-1)^{n}Q_{1})W((-1)^{p}Q_{1}\xi)\Xi
_{1}=W((-1)^{p+n}Q_{1}\xi)\hat{\Gamma}((-1)^{n}Q_{1})\Xi_{1}$. Using the
relations (\ref{r22}) and (\ref{r34}) we obtain%
\begin{equation}
\begin{array}{l}
\hat{T}(R)W(\xi)\left( \Xi_{0}\circ\Xi_{1}\right) = \\
\left( W(R_{0}\xi)\hat{T}_{0}(R_{0})\Xi_{0}\right) \circ\left(
W((-1)^{p+n}R_{1}\xi)\hat{T}_{1}(R_{1})\hat{\Gamma}((-1)^{n}Q_{1})\Xi
_{1}\right) .%
\end{array}
\label{r37}
\end{equation}
On the other hand we have from (\ref{w10}) and (\ref{r35})%
\begin{equation}
\begin{array}{l}
W(R\xi)\hat{T}(R)\left( \Xi_{0}\circ\Xi_{1}\right) = \\
\left( W(R_{0}\xi)\hat{T}_{0}(R_{0})\Xi_{0}\right) \circ\left(
W((-1)^{q}R_{1}\xi)\hat{T}_{1}(R_{1})\hat{\Gamma}((-1)^{n}Q_{1})\Xi
_{1}\right) .%
\end{array}
\label{r38}
\end{equation}
where $q$ is the parity of $\hat{T}_{0}(R_{0})\Xi_{0}$. If $n$ is even, the
operator $\hat{T}_{0}$ does not change the parity, and $q=p$ follows; if $n$
is odd, the operator $\hat{T}_{0}$ changes the parity, and $q=p+1\,mod\,2$
follows. In both cases we have $q=p+n\;mod\,2$, and the tensors (\ref{r37})
and (\ref{r38}) agree. Hence the relation (\ref{can7}) is derived for all $%
R\in\mathcal{O}_{2}(\mathcal{H}_{\mathbb{R}})$ and all $\xi\in\mathcal{H}%
_{\Lambda}$.

\subsection{The orbit of the vacuum\label{vacuum}}

Starting from $\Xi =\kappa _{0}\otimes 1_{vac}$ in formula (\ref{r35}) we
obtain the orbit of the vacuum in the module space $\mathcal{A}^{\Lambda }(%
\mathcal{H})$. If $U$ is invertible, we can take the definition (\ref{r11})
with $\xi =0$. The pull-back of this formula to $\mathcal{A}(\mathcal{H})$
is easily seen as%
\begin{equation}
\Phi \left[ U,V\right] =\Theta (X):=\left( \det \left( I+X^{\dagger
}X\right) \right) ^{-1/4}\exp \Omega \left( X\right)  \label{r41}
\end{equation}%
with $X=V\bar{U}^{-1}$. If $\ker U^{\dagger }=\mathcal{H}_{0}\neq \left\{
0\right\} $ the additional factor $T_{0}\left[ P_{0}V\right] \,1_{vac}=\hat{e%
}=e_{1}\wedge ...\wedge e_{n}\in \mathcal{H}_{0}^{\wedge n}$ appears that is
determined -- except for a phase factor -- by the subspace $\mathcal{H}%
_{0}=P_{0}\mathcal{H}$
\begin{equation}
\Phi \left[ U,V\right] =e_{1}\wedge ...\wedge e_{n}\wedge \Theta (X)
\label{r42}
\end{equation}%
with $X=V\bar{U}^{(-1)}$. Since $\mathrm{ran}$\thinspace $X\subset \mathcal{H%
}_{1}=\mathrm{ran}$\thinspace $U$ the tensor $\Theta (X)$ is an element of $%
\mathcal{A}(\mathcal{H}_{1})$. The vectors (\ref{r41}) have a positive
overlap with the vacuum $\left( 1_{vac}\mid \Theta (X)\right) =\left( \det
\left( I+XX^{\dagger }\right) \right) ^{-1/4}>0$, whereas the vectors (\ref%
{r42}) are always orthogonal to the vacuum.

If $U$ is invertible, the relation
\begin{equation}
\Phi\left[ U,V\right] =\Phi\left[ US,V\bar{S}\right] ,\;S\in \mathcal{U}(%
\mathcal{H}),  \label{r43}
\end{equation}
follows for the vectors (\ref{r41}). If $\ker U^{\dagger}=\mathcal{H}%
_{0}\neq\left\{ 0\right\} $, we can fix the phase of the operators $%
T_{0}(R_{0})=T_{0}\left[ P_{0}V,\hat{e}\right] $ as indicated in Sec. \ref%
{dual}. Then the rule (\ref{r39}) implies the relation (\ref{r43}) also for
the vectors (\ref{r42}). As a consequence of this phase convention the
vectors $\Phi$ only depend on the variables of the coset space $\mathcal{O}%
_{2}/\mathcal{Q}$. Using the notations $T(R)$ and $\Phi(R)$ instead of $T%
\left[ U,V\right] $ and $\Phi\left[ U,V\right] $, we obtain from (\ref{r3})
the transformation rule for vectors on the orbit of the vacuum
\begin{equation}
T(R_{2})\Phi(R_{1})=\chi\,\Phi(R_{2}R_{1}),  \label{r44}
\end{equation}
where $R_{j}\in\mathcal{O}_{2},\,j=1,2$, are orthogonal transformations, and
$\chi$ is a phase factor. If $R_{1}$ and $R_{3}=R_{2}R_{1}$ are orthogonal
transformations with invertible $U$, the transformation rule (\ref{r44}))
gets the form
\begin{equation}
T\left[ U_{2},V_{2}\right] \,\Theta(X_{1})=\chi\,\Theta(X_{3})  \label{r45}
\end{equation}
with $X_{3}$ calculated from the product rule (\ref{orth2}). We can choose
the operators $U_{1}=\left( I+X_{1}X_{1}^{+}\right) ^{-\frac{1}{2}}$ and $%
V_{1}=X_{1}\left( I+X_{1}^{+}X_{1}\right) ^{-\frac{1}{2}}$, cf. Remark \ref%
{lift} in Sec. \ref{quot}. Then $U_{3}=U_{2}U_{1}+V_{2}\bar{V}_{1}$ and $%
V_{3}=U_{2}V_{1}+V_{2}\bar{U}_{1}$ yield the operator
\begin{equation}
X_{3}=V_{3}\bar{U}_{3}^{-1}=\left( U_{2}X_{1}+V_{2}\right) \left( \bar {U}%
_{2}+\bar{V}_{2}X_{1}\right) ^{-1}  \label{r46}
\end{equation}
on the right side of (\ref{r45}). A formula of this type is given in Sec.
12.2 of \cite{Pressley/Segal:1986} for the finite dimensional orthogonal
group.

\appendix

\section{Fock space calculations\label{calc}}

\subsection{Products\label{prod}}

The following norm estimate for the exterior algebra is well known%
\begin{equation}
\left\Vert F\wedge G\right\Vert \leq \sqrt{\frac{(p+q)!}{p!q!}}\left\Vert
F\right\Vert \left\Vert G\right\Vert \;\mathrm{if}\;F\in \mathcal{A}_{p}(%
\mathcal{H}),\,G\in \mathcal{A}_{q}(\mathcal{H}),  \label{p1}
\end{equation}%
where $\left\Vert .\right\Vert $ is the standard Fock space norm (\ref{h3})
and $p,q\in \left\{ 0\right\} \cup \mathbb{N}$ are the degrees of the
tensors. The modified norm (\ref{h4}) of the Grassmann algebra $\Lambda $
leads to the estimates%
\begin{equation}
\left\Vert \lambda \mu \right\Vert _{\Lambda }\leq \sqrt{\frac{p!q!}{(p+q)!}}%
\left\Vert \lambda \right\Vert _{\Lambda }\left\Vert \mu \right\Vert
_{\Lambda }\leq \left\Vert \lambda \right\Vert _{\Lambda }\left\Vert \mu
\right\Vert _{\Lambda }\;\mathrm{if}\;\lambda \in \Lambda _{p},\,\mu \in
\Lambda _{q}.  \label{p2}
\end{equation}%
These stricter bounds imply that the Grassmann product is continuous with
\cite{Kupsch/Smolyanov:1998, Kupsch/Smolyanov:2000}
\begin{equation}
\left\Vert \lambda _{1}\lambda _{2}\right\Vert _{\Lambda }\leq \sqrt{3}%
\left\Vert \lambda _{1}\right\Vert _{\Lambda }\left\Vert \lambda
_{2}\right\Vert _{\Lambda }\;\mathrm{for\,all}\;\lambda _{1,2}\in \Lambda .
\label{p3}
\end{equation}%
If $\lambda \in \Lambda _{1}$ the estimate $\left\Vert \lambda \mu
\right\Vert _{\Lambda }\leq \left\Vert \lambda \right\Vert _{\Lambda
}\left\Vert \mu \right\Vert _{\Lambda }$ follows from (\ref{p2}) for all $%
\mu \in \Lambda $. The definition (\ref{h4}) of the norm implies that the
product $\lambda _{\mathbf{K}}$ of vectors in the generating space $\lambda
_{k}\in \Lambda _{1},\,k\in \mathbf{K},\,\left\vert \mathbf{K}\right\vert
=p\in \mathbb{N}$, has a norm $\left\Vert \lambda _{\mathbf{K}}\right\Vert
_{\Lambda }\leq \left( p!\right) ^{-1}{\prod\limits_{k\in \mathbf{K}}}%
\left\Vert \lambda _{k}\right\Vert _{\Lambda }$.

The product $\Theta \circ \Xi $ is not defined on the full module Fock space
$\mathcal{A}^{\Lambda }(\mathcal{H})$, but it can be extended to a larger
space than $\mathcal{A}_{fin}(\mathcal{H)}$ by continuity arguments. Any
tensor $\Theta \in \mathcal{A}_{fin}^{\Lambda }(\mathcal{H})$ can be
decomposed into $\Theta =\sum_{p=0}^{\infty }\Theta _{p}$ with $\Theta \in
\Lambda \widehat{\otimes }\mathcal{A}_{p}(\mathcal{H})$. The product has the
structure $\Theta _{p}\circ \Xi _{q}\in \Lambda \widehat{\otimes }\mathcal{A}%
_{p+q}(\mathcal{H})$ if $\Theta _{p}\in \Lambda \widehat{\otimes }\mathcal{A}%
_{p}(\mathcal{H})$ and $\Xi _{q}\in \Lambda \widehat{\otimes }\mathcal{A}%
_{q}(\mathcal{H})$ with norm estimate
\begin{equation}
\left\Vert \Theta _{p}\circ \Xi _{q}\right\Vert _{\otimes }\leq \sqrt{3\,%
\frac{(p+q)!}{p!q!}}\left\Vert \Theta _{p}\right\Vert _{\otimes }\left\Vert
\Xi _{q}\right\Vert _{\otimes }.  \label{p4}
\end{equation}%
We introduce the family of Hilbert norms $\left\Vert \Theta \right\Vert
_{(\alpha )}^{2}=\sum_{p=0}^{\infty }(p!)^{\alpha }\left\Vert \Theta
_{p}\right\Vert _{\otimes }^{2}$ with a parameter $\alpha \geq 0$. The
completion of $\mathcal{A}_{fin}^{\Lambda }(\mathcal{H})$ with respect to
the norm $\left\Vert .\right\Vert _{(\alpha )}$ is called
$\mathcal{A}_{(\alpha )}^{\Lambda }(\mathcal{H})$. The
inclusions $\mathcal{A}_{fin}^{\Lambda }(\mathcal{H})\subset \mathcal{A}%
_{(\alpha )}^{\Lambda }(\mathcal{H})\subset \mathcal{A}_{(0)}^{\Lambda }(%
\mathcal{H})=\mathcal{A}^{\Lambda }(\mathcal{H}),\,\alpha \geq 0$, are
obvious. Then Proposition 3 in Appendix A of \cite{Kupsch/Smolyanov:1998}
(or the results of \cite{Kupsch/Smolyanov:2000}) imply the statement:

\begin{lemma}
\label{product1}The product of $\mathcal{A}_{fin}^{\Lambda }(\mathcal{H})$
can be extended to a continuous mapping $\mathcal{A}_{(\alpha )}^{\Lambda }(%
\mathcal{H})\times \mathcal{A}_{(\beta )}^{\Lambda }(\mathcal{H}%
)\longrightarrow \mathcal{A}^{\Lambda }(\mathcal{H})$ for spaces $\mathcal{A}%
_{(\alpha )}^{\Lambda }(\mathcal{H})$ and $\mathcal{A}_{(\beta )}^{\Lambda }(%
\mathcal{H})$ if $\alpha >0$ and $\beta >0$.
\end{lemma}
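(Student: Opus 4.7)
The plan is to establish a bilinear continuity estimate
$\|\Theta\circ\Xi\|_{\otimes}\le K(\alpha,\beta)\,\|\Theta\|_{(\alpha)}\|\Xi\|_{(\beta)}$
on the dense subspace $\mathcal{A}_{fin}^{\Lambda}(\mathcal{H})\times\mathcal{A}_{fin}^{\Lambda}(\mathcal{H})$ of $\mathcal{A}_{(\alpha)}^{\Lambda}(\mathcal{H})\times\mathcal{A}_{(\beta)}^{\Lambda}(\mathcal{H})$, from which the unique continuous bilinear extension into $\mathcal{A}^{\Lambda}(\mathcal{H})=\mathcal{A}_{(0)}^{\Lambda}(\mathcal{H})$ follows by a routine BLT/density argument.

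The key steps, in order: first decompose $\Theta=\sum_{p\ge 0}\Theta_{p}$ and $\Xi=\sum_{q\ge 0}\Xi_{q}$ with components $\Theta_{p}\in\Lambda\widehat{\otimes}\mathcal{A}_{p}(\mathcal{H})$, $\Xi_{q}\in\Lambda\widehat{\otimes}\mathcal{A}_{q}(\mathcal{H})$ (finite sums, since both tensors lie in $\mathcal{A}_{fin}^{\Lambda}$). The orthogonal decomposition of $\mathcal{A}^{\Lambda}(\mathcal{H})$ along the degree of the antisymmetric factor gives $\|\Theta\circ\Xi\|_{\otimes}^{2}=\sum_{n}\bigl\|\sum_{p+q=n}\Theta_{p}\circ\Xi_{q}\bigr\|_{\otimes}^{2}$. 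Next, apply the triangle inequality on the fibre $\Lambda\widehat{\otimes}\mathcal{A}_{n}(\mathcal{H})$ together with the estimate (\ref{p4}) to get
$\bigl\|\sum_{p+q=n}\Theta_{p}\circ\Xi_{q}\bigr\|_{\otimes}\le\sqrt{3}\sum_{p+q=n}\sqrt{\tbinom{n}{p}}\,\|\Theta_{p}\|_{\otimes}\|\Xi_{q}\|_{\otimes}$. Introducing the auxiliary coefficients $a_{p}:=(p!)^{\alpha/2}\|\Theta_{p}\|_{\otimes}$ and $b_{q}:=(q!)^{\beta/2}\|\Xi_{q}\|_{\otimes}$, so that $\|\Theta\|_{(\alpha)}^{2}=\sum_{p}a_{p}^{2}$ and $\|\Xi\|_{(\beta)}^{2}=\sum_{q}b_{q}^{2}$, apply Cauchy-Schwarz over the $n+1$ pairs $(p,q)$ with $p+q=n$ to obtain
$\bigl\|\sum_{p+q=n}\Theta_{p}\circ\Xi_{q}\bigr\|_{\otimes}^{2}\le 3\,C_{n}\sum_{p+q=n}a_{p}^{2}b_{q}^{2}$,
where $C_{n}:=\sum_{p+q=n}\tbinom{n}{p}(p!)^{-\alpha}(q!)^{-\beta}$. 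Summing over $n$ and using Fubini yields
$\|\Theta\circ\Xi\|_{\otimes}^{2}\le 3\,(\sup_{n}C_{n})\,\|\Theta\|_{(\alpha)}^{2}\|\Xi\|_{(\beta)}^{2}$,
so the entire question reduces to the combinatorial claim $\sup_{n}C_{n}<\infty$.

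The main obstacle is precisely this last claim, which is where the hypotheses $\alpha,\beta>0$ are consumed. I expect $C_{n}\to 0$ and would verify it as follows. Writing the generic term as $n!/((p!)^{1+\alpha}(q!)^{1+\beta})$ and setting $p=\theta n$, Stirling's formula shows that the logarithm of the term has leading behavior $-n\log n\cdot[\alpha\theta+\beta(1-\theta)]$, which is strictly negative on $[0,1]$ and bounded above by $-n\log n\cdot\min(\alpha,\beta)$; consequently each of the $n+1$ terms decays faster than any polynomial and the sum tends to zero. A more elementary backup avoiding asymptotics uses that for any pair with $p+q=n$ one has $\max(p,q)\ge\lceil n/2\rceil$, hence $(p!)^{\alpha}(q!)^{\beta}\ge(\lceil n/2\rceil!)^{\min(\alpha,\beta)}$; combined with $\tbinom{n}{p}\le 2^{n}$ this yields $C_{n}\le (n+1)\,2^{n}/(\lceil n/2\rceil!)^{\min(\alpha,\beta)}\to 0$. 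Either route closes the combinatorial gap; density of $\mathcal{A}_{fin}^{\Lambda}(\mathcal{H})$ in each $\mathcal{A}_{(\gamma)}^{\Lambda}(\mathcal{H})$ together with the bilinear estimate then produces the continuous extension asserted by the lemma.
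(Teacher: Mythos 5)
Your argument is correct and, unlike the paper, self-contained: the paper does not prove Lemma \ref{product1} at all but delegates it to Proposition 3 in Appendix A of \cite{Kupsch/Smolyanov:1998} (or to \cite{Kupsch/Smolyanov:2000}), so your proof supplies exactly the details the paper outsources. The chain of estimates checks out: the degree decomposition $\|\Theta\circ\Xi\|_{\otimes}^{2}=\sum_{n}\|\sum_{p+q=n}\Theta_{p}\circ\Xi_{q}\|_{\otimes}^{2}$ is legitimate because $\Theta_{p}\circ\Xi_{q}\in\Lambda\widehat{\otimes}\mathcal{A}_{p+q}(\mathcal{H})$ and these fibres are mutually orthogonal; combining (\ref{p4}) with Cauchy--Schwarz over the pairs $p+q=n$ correctly isolates the constant $C_{n}=\sum_{p+q=n}n!\,(p!)^{-1-\alpha}(q!)^{-1-\beta}$; and the rearrangement $\sum_{n}\sum_{p+q=n}a_{p}^{2}b_{q}^{2}=\|\Theta\|_{(\alpha)}^{2}\|\Xi\|_{(\beta)}^{2}$ is just the Cauchy product of nonnegative series. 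For the combinatorial claim the elementary route is the one to keep: from $\max(p,q)\ge\lceil n/2\rceil$ and $\binom{n}{p}\le 2^{n}$ one gets $C_{n}\le(n+1)2^{n}(\lceil n/2\rceil!)^{-\min(\alpha,\beta)}\to 0$, which uses $\alpha,\beta>0$ exactly where it must (at $\alpha=\beta=0$ one has $C_{n}=2^{n}$, so the hypothesis is genuinely needed); the Stirling sketch is dispensable. The concluding density/BLT extension is standard. The only stylistic remark is that the bound actually shows the product maps into $\mathcal{A}^{\Lambda}(\mathcal{H})$ with a norm constant $\sqrt{3\sup_{n}C_{n}}$ depending on $\alpha,\beta$, which is slightly more quantitative than the lemma asserts.
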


Another extension of the product is formulated in \textbf{Lemma \ref%
{product2}}, for which the proof is given here. We start with a tensor $%
\Theta \in \mathcal{A}_{fin}^{\Lambda }(\mathcal{H})$ and choose an ON basis
$\left\{ \kappa _{m}\mid m\in \mathbb{N}\right\} $ of $\Lambda _{1}$. Then $%
\xi \in \mathcal{H}_{\Lambda }$ can be written as $\xi =\sum_{m\in \mathbf{M}%
}\kappa _{m}\otimes h_{m}$ with $h_{m}\in \mathcal{H}$\thinspace ,\thinspace
$\sum_{m\in \mathbf{M}}\left\Vert h_{m}\right\Vert ^{2}=\left\Vert \xi
\right\Vert _{\otimes }^{2}$ and an index set $\mathbf{M}\subset \mathbb{N}$%
. The set $\left\{ p!\kappa _{\mathbf{L}}\mid \mathbf{L}\in \mathcal{P}(%
\mathbb{N}),\,p=\left\vert \mathbf{L}\right\vert \right\} $ is an ON basis
of $\Lambda $, and the tensor $\Theta $ can be expanded as $\Theta =\sum_{%
\mathbf{L}\in \mathcal{P}(\mathbb{N})}p!\kappa _{\mathbf{L}}\otimes F(%
\mathbf{L})$ with $F(\mathbf{L})\in \mathcal{A}(\mathcal{H}),\,\sum_{\mathbf{%
L}\in \mathcal{P}(\mathbb{N})}\left\Vert F(\mathbf{L})\right\Vert
^{2}=\left\Vert \Theta \right\Vert _{\otimes }^{2}$. The tensor $\xi \circ
\Theta =\sum_{m,\mathbf{L}}p!\kappa _{m}\kappa _{\mathbf{L}}\otimes \left(
h_{m}\wedge F(\mathbf{L})\right) $ has the norm \newline
$\left\Vert \xi \circ \Theta \right\Vert _{\otimes }^{2}\leq \sum_{\mathbf{K}%
\in \mathcal{P}(\mathbb{N})}\left( (p+1)^{-1}\sum^{\prime }\left\Vert
h_{m}\wedge F(\mathbf{L})\right\Vert \right) ^{2}$, where the sum $%
\sum^{\prime }$ extends over the $p+1$ pairs $\left\{ \left( m,\mathbf{L}%
\right) \in \mathbb{N}\times \mathcal{P}(\mathbb{N})\mid \left\{ m\right\}
\cup \mathbf{L}=\mathbf{K},\,\left\vert \mathbf{K}\right\vert =p+1\right\} $%
. This inequality implies\footnote{%
Here we use the estimate $\left( \sum_{j=1}^{n}x_{j}\right) ^{2}\leq n\left(
\sum_{j=1}^{n}x_{j}^{2}\right) $ if $x_{j}\geq 0,\,j=1,...,n$.} \newline
$\left\Vert \xi \circ \Theta \right\Vert _{\otimes }^{2}\leq \sum_{m,\mathbf{%
N}}\left\Vert h_{m}\wedge F(\mathbf{N})\right\Vert ^{2}\leq \left(
\sum_{n\in \mathbb{N}}\left\Vert h_{n}\right\Vert ^{2}\right) \cdot \sum_{%
\mathbf{L}\in \mathcal{P}(\mathbb{N})}\left\Vert F(\mathbf{L})\right\Vert
^{2}=\left\Vert \xi \right\Vert _{\otimes }^{2}\left\Vert \Theta \right\Vert
_{\otimes }^{2}$. Hence the operator norm estimate (\ref{h5}) is true for $%
\Theta \in \mathcal{A}_{fin}^{\Lambda }(\mathcal{H})$. The extension to
tensors $\Theta \in \mathcal{A}^{\Lambda }(\mathcal{H})$ follows by
continuity.

The inequality (\ref{h5}) implies that the creation operator $b^{+}(\xi)$ is
a continuous operator on $\mathcal{A}^{\Lambda}(\mathcal{H})$ with operator
norm $\left\Vert b^{+}(\xi)\right\Vert \leq\left\Vert \xi\right\Vert
_{\otimes}$. Moreover, a slight modification of the proof yields that the
annihilation operator $b^{-}(\xi)=\sum_{m\in\mathbf{M}}\kappa_{m}^{\ast}%
\otimes a^{-}(h_{m})$ is also continuous with operator norm $\left\Vert
b^{-}(\xi)\right\Vert \leq\left\Vert \xi^{\ast}\right\Vert
_{\otimes}=\left\Vert \xi\right\Vert _{\otimes}$.

\subsection{Exponentials of tensors of second degree\label{exponentials}}

If $X$ is an operator in $\mathcal{L}_{2}^{-}(\mathcal{H})$, the following
three statements are valid:

\begin{enumerate}
\item There exists an ON system $\left\{ e_{m}\in\mathcal{H}\mid m\in\mathbf{%
M}\cup(-\mathbf{M})\right\} $ with $\mathbf{M}=\left\{ 1,2,...,M\right\} $
or $\mathbf{M}=\mathbb{N}$ and complex numbers $z_{m}\in\mathbb{C}$ with $%
\left\vert z_{1}\right\vert \geq\left\vert z_{2}\right\vert \geq...>0$ so
that the mapping $\mathcal{H}\ni f\rightarrow Xf\in\mathcal{H}$ has the
representation
\begin{equation}
Xf=\sum_{m\in\mathbf{M}}z_{m}\left( e_{m}\left\langle e_{-m}\mid
f\right\rangle -e_{-m}\left\langle e_{m}\mid f\right\rangle \right) .
\label{g3}
\end{equation}
The numbers $z_{m}$ are square-summable with $\sum_{m\in\mathbf{M}%
}\left\vert z_{m}\right\vert ^{2}=2^{-1}\mathrm{tr}X^{\dag}X=2^{-1}\left%
\Vert X\right\Vert _{2}^{2}$. Here $\left\langle f\mid g\right\rangle
=\left( f^{\ast}\mid g\right) $ is the $\mathbb{C}$-bilinear symmetric form
introduced in Sec. \ref{hilbert}.

\item There exists exactly one tensor $\Omega(X)\in\mathcal{A}_{2}(\mathcal{H%
})$ such that the identities%
\begin{equation}
\left\langle \Omega(X)\mid f\wedge g\right\rangle =\left\langle f\wedge
g\mid\Omega(X)\right\rangle =\left\langle f\mid Xg\right\rangle
=-\left\langle g\mid Xf\right\rangle  \label{g4}
\end{equation}
are true for all $f,g\in\mathcal{H}$. The mapping $\mathcal{L}_{2}^{-}(%
\mathcal{H})\ni X\rightarrow\Omega(X)\in\mathcal{A}_{2}(\mathcal{H})$ is a
linear and continuous isomorphism, the norms are related by $\left\Vert
\Omega(X)\right\Vert _{2}^{2}=2^{-1}\left\Vert X\right\Vert _{HS}^{2}$.

\item The exponential series (\ref{h11}) converges absolutely, and the inner
product of two of these exponentials is (\ref{h15}).
\end{enumerate}

The first Statement is an immediate consequence of Lemma 4.1 in \cite%
{Ruijsenaars:1978}.

The identities (\ref{g4}) imply that $X\rightarrow \Omega (X)$ is a linear
bijective mapping from $X\in \mathcal{F}^{-}(\mathcal{H})$ -- the space of
skew symmetric finite rank operators -- onto $\Omega (X)\in \mathcal{H}%
\wedge \mathcal{H}$. Given $X\in \mathcal{F}^{-}(\mathcal{H})$ the operator $%
X$ has the representation (\ref{g3}) with a finite index set $\mathbf{M}$.
The tensor
\begin{equation}
\Omega (X):=\sum_{m\in \mathbf{M}}z_{m}\,e_{-m}\wedge e_{m}  \label{g6}
\end{equation}%
is an element of $\mathcal{H}\wedge \mathcal{H}$ and it satisfies the
identities (\ref{g4}). Then the norm identity $\left\Vert X\right\Vert
_{2}^{2}=2\sum_{a\in \mathbf{M}}\left\vert z_{m}\right\vert ^{2}=2\left\Vert
\Omega (X)\right\Vert _{2}^{2}$ yields that the linear mapping $\mathcal{F}%
^{-}(\mathcal{H})\ni X\rightarrow \Omega (X)\in \mathcal{A}_{2}(\mathcal{H})$
can be extended by continuity to $\mathcal{L}_{2}^{-}(\mathcal{H})$. The
relation (\ref{g4}) is equivalent to \newline
$\left( f\wedge g\mid \Omega (X)\right) =\left( g\mid X\,f^{\ast }\right) $.

For the proof of Statement 3 we start from the representation (\ref{g6}) of
the tensor $\Omega (X)$ with a finite or countable set $\mathbf{M}\subset
\mathbb{N}$. Then the powers of $\Omega $ are calculated as $\Omega ^{\wedge
p}(X)=p!\sum_{\mathbf{A},\left\vert \mathbf{A}\right\vert =p}z_{\mathbf{A}%
}\,e_{-\mathbf{A}}\wedge e_{\mathbf{A}}$. Thereby $\sum_{\mathbf{A}%
,\left\vert \mathbf{A}\right\vert =p}$ means summation over all subsets $%
\mathbf{A}\subset \mathbf{M}$ with $\left\vert \mathbf{A}\right\vert =p\geq
1 $ elements. The exponential series is $\exp \Omega (X)=\sum_{\mathbf{A}}z_{%
\mathbf{A}}\,e_{-\mathbf{A}}\wedge e_{\mathbf{A}}$, where the sum extends
over all finite subsets $\mathbf{A}$ of $\mathbf{M}$, including $\mathbf{A}%
=\emptyset $. The inner product%
\begin{equation}
\left( \Omega ^{\wedge p}(X)\mid \Omega ^{\wedge p}(X)\right) =(p!)^{2}\sum_{%
\mathbf{A},\left\vert \mathbf{A}\right\vert =p}\left\vert
z_{a_{1}}\right\vert ^{2}...\left\vert z_{a_{p}}\right\vert ^{2}\leq
p!\left( \sum_{n\in \mathbf{M}}\left\vert z_{n}\right\vert ^{2}\right)
^{p}=p!\left( \frac{1}{2}\left\Vert X\right\Vert _{2}^{2}\right) ^{p}
\label{g9}
\end{equation}%
yields the identities%
\begin{equation}
\begin{array}{l}
\left\Vert \exp \Omega (X)\right\Vert ^{2}=\sum_{p=0,1,2,...}\left(
\sum_{n_{1}<n_{2}<...<n_{p}}\left\vert z_{n_{1}}\right\vert
^{2}...\left\vert z_{n_{p}}\right\vert ^{2}\right) =\prod\limits_{n\in
\mathbf{M}}(1+\left\vert z_{n}\right\vert ^{2})= \\
\sqrt{\det \left( I+X^{\dag }X\right) }=\sqrt{\det \left( I+XX^{\dag
}\right) }\leq \exp \left( \frac{1}{2}\left\Vert X\right\Vert
_{2}^{2}\right) .%
\end{array}
\label{g10}
\end{equation}%
Hence the exponential series (\ref{h11}) converges in norm within $\mathcal{A%
}(\mathcal{H})$ for all $X\in \mathcal{L}_{2}^{-}(\mathcal{H})$, and $%
\mathcal{L}_{2}^{-}(\mathcal{H})\ni X\rightarrow \exp \Omega (X)\in \mathcal{%
A}(\mathcal{H})$ is an entire analytic function. The mapping \newline
$\mathcal{L}_{2}^{-}(\mathcal{H})\times \mathcal{L}_{2}^{-}(\mathcal{H})\ni
(X,Y)\rightarrow \varphi (X,Y):=\left( \exp \Omega (X)\mid \exp \Omega
(Y)\right) \in \mathbb{C}$ is antianalytic in $X$ and analytic in $Y.$ Hence
the function $\varphi (X,Y)$ is uniquely determined by the values on the
diagonal $X=Y$, and $\varphi (X,X)=\sqrt{\det \left( I+X^{\dag }X\right) }=%
\sqrt{\det \left( I+XX^{\dag }\right) }$ implies (\ref{h15}).

Since the left side of (\ref{h15}) is antianalytic in $X$ and analytic in $Y$
the function $\sqrt{\det(I+X^{\dag}Y)}$ can be expanded in a power series of
the variables $X^{\dagger}$ and $Y$. An explicit form can be obtained from
the left side of (\ref{h15}). This expansion is often formulated with
pfaffians of the operators $X^{\dagger}$ and $Y$ restricted to final
dimensional subspaces, cf. e.g. Sec. 12 of \cite{Pressley/Segal:1986} or
\cite{JLW:1989}.

The exponential of a tensor $\lambda \in \Lambda _{2}$ converges within $%
\Lambda $ with norm estimate \newline
$\left\Vert \exp \lambda \right\Vert _{\Lambda }^{2}\leq 1+\left\Vert
\lambda \right\Vert _{\Lambda }^{2}+\left( 2!\right) ^{-2}\left\Vert \lambda
^{2}\right\Vert _{\Lambda }^{2}+...\overset{(\ref{p2})}{\leq }%
\sum_{p=0}^{\infty }\left( p!\right) ^{-2}\left\Vert \lambda \right\Vert
_{\Lambda }^{2p}\leq \exp \left( \left\Vert \lambda \right\Vert _{\Lambda
}^{2}\right) $.

\subsection{Ultracoherent vectors\label{ultra}}

The mapping $\mathcal{A}(\mathcal{H})\ni F\rightarrow \kappa _{0}\otimes
F\in \mathcal{A}^{\Lambda }(\mathcal{H})$ gives a natural embedding of the
Fock space $\mathcal{A}(\mathcal{H})$ into the $\Lambda $-module $\mathcal{A}%
^{\Lambda }(\mathcal{H})=\Lambda \widehat{\otimes }\mathcal{A}(\mathcal{H})$%
. The tensor (\ref{h11}) can therefore be identified with the element $\Psi
(X):=\kappa _{0}\otimes \exp \Omega (X)=\exp \left( \kappa _{0}\otimes
\Omega (X)\right) $ of $\mathcal{A}^{\Lambda }(\mathcal{H})$. The norm of $%
\Psi (X)$ is $\left\Vert \Psi (X)\right\Vert _{\otimes }=\left\Vert \exp
\Omega (X)\right\Vert \overset{(\ref{g10})}{=}\left( \det \left( I+X^{\dag
}X\right) \right) ^{1/4}$. If $\xi \in \mathcal{H}_{\Lambda }^{_{alg}}$ is
an element of the algebraic superspace and $X\in \mathcal{L}_{2}^{-}(%
\mathcal{H})$ is a finite rank operator, the exponential functions are
finite sums, and the identities $\left( \exp \xi
\right) \circ \Psi (X)=\Psi (X)\circ \left( \exp \xi \right) =\exp \left(
\xi +\kappa _{0}\otimes \Omega (X)\right) $ follow by algebraic calculation.
The norm estimate (\ref{h5}) implies that the products $\left( \exp \xi
\right) \circ \Psi (X)$ and $\Psi (X)\circ \left( \exp \xi \right) $ are
defined with a norm $\left\Vert \left( \exp \xi \right) \circ \Psi
(X)\right\Vert _{\otimes }\leq \sum_{p}(p!)^{-1}\left\Vert \xi ^{p}\circ
\Psi (X)\right\Vert _{\otimes }\leq \\ \sum_{p}(p!)^{-1}\left\Vert \xi
\right\Vert _{\otimes }^{p}\cdot \left\Vert \Psi (X)\right\Vert _{\otimes
}=\left( \exp \left\Vert \xi \right\Vert _{\otimes }\right) \left\Vert \Psi
(X)\right\Vert _{\otimes }$. Hence the ultracoherent vector
\begin{equation}
\Psi (X,\xi ):=\left( \exp \xi \right) \circ \Psi (X)=\Psi (X)\circ \left(
\exp \xi \right)  \label{g21}
\end{equation}%
is a well defined element of $\mathcal{A}^{\Lambda }(\mathcal{H})$ for all $%
X\in \mathcal{L}_{2}^{-}(\mathcal{H})$ and $\,\xi \in \mathcal{H}_{\Lambda }$%
. Moreover, the inequalities (\ref{h5}) and (\ref{g9}) imply the estimate $%
\left\Vert \xi ^{p}\circ \left( \kappa _{0}\otimes \Omega ^{\wedge q}\right)
\right\Vert _{\otimes }\leq \sqrt{q!}\left\Vert \xi \right\Vert _{\otimes
}^{p}\cdot \left\Vert \Omega \right\Vert ^{q}$ for all integers $p,q\geq 0$.
Hence the series $\exp \left( \xi +\kappa _{0}\otimes \Omega (X)\right) $
converges uniformly in $\xi $ and $X$ within the space $\mathcal{A}^{\Lambda
}(\mathcal{H})$ with the norm estimate $\left\Vert \exp \left( \xi +\kappa
_{0}\otimes \Omega (X)\right) \right\Vert _{\otimes }\leq \sum_{p,q\geq
0}\left( p!\right) ^{-1}\left( q!\right) ^{-\frac{1}{2}}\left\Vert \xi
\right\Vert _{\otimes }^{p}\left( 2^{-\frac{1}{2}}\left\Vert X\right\Vert
_{2}\right) ^{q}$. The tensor (\ref{g21}) therefore coincides with $\exp
\left( \xi +\kappa _{0}\otimes \Omega (X)\right) $ for all $X\in \mathcal{L}%
_{2}^{-}(\mathcal{H})$ and $\,\xi \in \mathcal{H}_{\Lambda }$. The tensors $%
\exp \xi ,\,\Psi (X)$ and $\Psi (X,\xi )$ have even parity.

\begin{remark}
The existence of the products (\ref{g21}) can also be derived from Lemma \ref%
{product1}. The power $\xi ^{p}$ is an element of $\Lambda \widehat{\otimes }%
\mathcal{A}_{p}(\mathcal{H})$ with norm $\left\Vert \xi ^{p}\right\Vert
_{\otimes }^{2}\leq \left\Vert \xi \right\Vert _{\otimes }^{2p}$. Hence $%
\exp \xi $ is an element of the space $\mathcal{A}_{(\alpha )}^{\Lambda }(%
\mathcal{H})$ if $\alpha \in \left[ 0,2\right) $. The tensor $\kappa
_{0}\otimes \Omega ^{\wedge q}(X)$ is an element of $\Lambda \widehat{%
\otimes }\mathcal{A}_{2q}(\mathcal{H})$ with norm $\left\Vert \kappa
_{0}\otimes \Omega ^{\wedge q}(X)\right\Vert _{\otimes }^{2}\leq \left(
q!\right) \left\Vert X\right\Vert _{2}^{2q}$, cf. (\ref{g9}). Hence $\Psi
(X) $ is an element of the space $\mathcal{A}_{(\beta )}^{\Lambda }(\mathcal{%
H})$ if $\beta \in \left[ 0,1/2\right) $. The conditions of Lemma \ref%
{product1} for the product (\ref{g21}) are therefore satisfied.
\end{remark}

The relation $\left( f\wedge g\mid \Omega (X)\right) =\left( g\mid
X\,f^{\ast }\right) ,\,f,g\in \mathcal{H}$, implies $\left( \xi \circ \xi
\mid \kappa _{0}\otimes \Omega (X)\right) =\left( \xi \mid X\,\xi ^{\ast
}\right) $ with $\xi \in \mathcal{H}_{\Lambda }$. Then the identity
\begin{equation}
\left( \exp \xi \mid \Psi (X)\right) =\exp \frac{1}{2}\left( \xi \mid X\,\xi
^{\ast }\right)   \label{g22}
\end{equation}%
follows by series expansion and repeated use of the identity (\ref{h17}).
The relation (\ref{h14}) is a consequence of the identities (\ref{h16}), (%
\ref{h17}) and (\ref{g22}).

To derive the action of the Weyl operator on an ultracoherent vector $%
\Psi(X,\xi)$ we calculate with the variables $\xi,\eta\in\mathcal{H}%
_{\Lambda}$ and $X\in\mathcal{L}_{2}^{-}(\mathcal{H})$ using (\ref{w1}) and (%
\ref{h14})\newline
$\left( \exp\,\xi\mid W(\eta)\Psi(X,\xi)\right) =\left(
W^{+}(\eta)\exp\,\xi\mid\Psi(X,\xi)\right) =\newline
\exp\left( -\frac{1}{2}\left\langle \eta^{\ast}\parallel\eta\right\rangle +%
\frac{1}{2}\left\langle \eta^{\ast}\parallel X\eta^{\ast}-2\xi\right\rangle
\right) \left\langle \exp\xi^{\ast}\parallel\exp\left(
\xi+\eta-X\eta^{\ast}+\kappa_{0}\otimes\Omega(X)\right) \right\rangle $.
This identity implies the formula (\ref{w7}). The restriction of (\ref{w7})
to $\Psi(X)$ is
\begin{equation}
W(\eta)\Psi(X)=\exp\left( -\frac{1}{2}\left\langle \eta^{\ast}\parallel
\eta-X\eta\right\rangle \right) \Psi(X,\eta-X\eta^{\ast}).  \label{g23}
\end{equation}

Let $\xi$ be a supervector in $\mathcal{H}_{\Lambda}$, then $%
\eta=(I+XX^{\dag })^{-1}\xi+X(I+X^{\dag}X)^{-1}\xi^{\ast}$ is an element of $%
\mathcal{H}_{\Lambda}$, which satisfies $\eta-X\eta^{\ast}=\xi$. With this
supervector $\xi$ we obtain from (\ref{g23})
\begin{equation}
W(\eta)\Psi(X)=\Psi(X,\xi)\exp\left( \frac{1}{2}\left\langle \xi
\parallel(I-X^{\dag}X)^{-1}\xi^{\ast}+X^{\dag}(I-XX^{\dag})^{-1}\xi\right%
\rangle \right) .  \label{g24}
\end{equation}
The inner product $\left( W\,\Psi\mid W\,\Psi\right) $ is known as $\left(
W(\eta)\Psi(X)\mid W(\eta)\Psi(X)\right) \overset{(\ref{w6})}{=}\left(
\Psi(X)\mid\Psi(X)\right) \overset{(\ref{g10})}{=}\kappa_{0}\otimes\sqrt {%
\det\left( I+X^{\dag}X\right) }$. Substituting (\ref{g24}) into this
identity we obtain formula (\ref{h13}).

\section{Calculations for Sec. \ref{rep}\label{calcRep}}

\subsection{The operator $T(R)$ of Sec. \ref{invertible}\label{pull}}

In this Appendix we calculate the image of the operator $T(R)$ on arbitrary
factorizing tensors $f_{\mathbf{M}}\in \mathcal{A}_{p}(\mathcal{H)}$ with $%
\left\vert \mathbf{M}\right\vert =M\in \mathbb{N}$. We start from the
expansion $\xi =\sum_{m\in \mathbb{N}}\kappa _{m}\otimes f_{m}\in \mathcal{H}%
_{\Lambda }$ with an ON basis $\left\{ \kappa _{m}\right\} $ of $\Lambda
_{1} $ and vectors $f_{m}\in \mathcal{H}$ with $\sum_{m}\left\Vert
f_{m}\right\Vert ^{2}=\left\Vert \xi \right\Vert _{\otimes }^{2}$. The
coherent vector is $\exp \xi =\sum_{\mathbf{M}\subset \mathbb{N}}\kappa _{%
\mathbf{M}}\otimes f_{\mathbf{M}}$. The quadratic form $\left\langle \xi
\mid Y\,\xi \right\rangle $ agrees with $\left\langle \kappa _{0}\otimes
\Omega (Y)\mid \xi \circ \,\xi \right\rangle $, cf. Appendix \ref{ultra},
and its exponential is, cf. (\ref{g22}),%
\begin{equation*}
\exp \left( \frac{1}{2}\left\langle \xi \mid Y\,\xi \right\rangle \right)
=\left\langle \exp \left( \kappa _{0}\otimes \Omega (Y)\right) \mid \exp \xi
\right\rangle =\sum_{\mathbf{M}\subset \mathbb{N}}\varphi (\mathbf{M}%
)\,\kappa _{\mathbf{M}}
\end{equation*}%
with $\varphi (\mathbf{M}):=\left\langle \exp \Omega (Y)\mid f_{\mathbf{M}%
}\right\rangle $. The numbers $\varphi (\mathbf{M})$ have the values \newline
$\varphi (\emptyset )=1$, $\varphi (\mathbf{M})=0$ if $\left\vert \mathbf{M}%
\right\vert $ is odd, and
\begin{equation*}
\varphi (\mathbf{M})=(q!)^{-1}\left\langle \Omega ^{\wedge q}(Y)\mid f_{%
\mathbf{M}}\right\rangle =\mathrm{Pf}\,\left( \left\langle f_{m}\mid
Yf_{n}\right\rangle _{\mathbf{M}}\right)
\end{equation*}%
if $\left\vert \mathbf{M}\right\vert =M=2q,\,q\in \mathbb{N}$. Thereby $%
\left\langle f_{m}\mid Yf_{n}\right\rangle _{\mathbf{M}}$ is the skew
symmetric $M\times M$ matrix $\left\{ \left\langle f_{m}\mid
Yf_{n}\right\rangle ,\,m\in \mathbf{M},\,n\in \mathbf{M}\right\} $, and $%
\mathrm{Pf}$ is the pfaffian of this matrix. The $\Lambda $-dependent
factors of the right side of (\ref{r11}) are
\begin{equation*}
\begin{array}{l}
\exp \left( \frac{1}{2}\left\langle \xi \mid Y\xi \right\rangle \right) \exp
\left( U^{\dagger -1}\xi \right) = \\
\left( \sum_{\mathbf{K}\subset \mathbb{N}}\varphi (\mathbf{K})\,\kappa _{%
\mathbf{K}}\right) \left( \sum_{\mathbf{L}\subset \mathbb{N}}\kappa _{%
\mathbf{L}}\otimes (U^{\dagger -1}f)_{\mathbf{L}}\right) = \\
\sum_{\mathbf{M}\subset \mathbb{N}}\kappa _{\mathbf{M}}\otimes \left( \sum_{%
\mathbf{K}\cup \mathbf{L=M},\mathbf{K}\cap \mathbf{L}=\emptyset }(-1)^{\tau (%
\mathbf{K},\mathbf{L})}\varphi (\mathbf{K})(U^{\dagger -1}f)_{\mathbf{L}%
}\right) .%
\end{array}%
\end{equation*}%
$\newline
$The sign factor comes from $\kappa _{\mathbf{K}}\kappa _{\mathbf{L}%
}=(-1)^{\tau (\mathbf{K},\mathbf{L})}\kappa _{\mathbf{K}\cup \mathbf{L}}$.
Including the factor $c_{X}\exp \left( \kappa _{0}\otimes \Omega (X)\right) $
the right side of the ansatz (\ref{r11}) gets the form $\sum_{\mathbf{M}%
\subset \mathbb{N}}\kappa _{\mathbf{M}}\otimes F(\mathbf{M})$, where the
tensors $F(\mathbf{M})\in \mathcal{A}(\mathcal{H})$ are given by%
\begin{equation*}
F(\mathbf{M})=T(R)\,f_{\mathbf{M}}=c_{X}\left( \sum_{\mathbf{K}\cup \mathbf{%
L=M},\mathbf{K}\cap \mathbf{L}=\emptyset }(-1)^{\tau (\mathbf{K},\mathbf{L}%
)}\varphi (\mathbf{K})\,(U^{\dagger -1}f)_{\mathbf{L}}\right) \wedge \exp
\Omega (X).
\end{equation*}%
For tensors of degree less than 3 we obtain $T(R)1_{vac}=c_{X}\exp \Omega
(X),\newline
T(R)f=c_{X}\,(U^{\dagger -1}f)\wedge \exp \Omega (X)$, and\newline
$T(R)\left( f_{1}\wedge f_{2}\right) =c_{X}\,\left( (U^{\dagger
-1}f_{1})\wedge (U^{\dagger -1}f_{2}))+\left\langle f_{1}\mid
Yf_{2}\right\rangle 1_{vac}\right) \wedge \exp \Omega (X).$

\subsection{Calculations for the duality mapping\label{reflections}}

If $\mathbf{A}$ and $\mathbf{B}$ are finite subsets of $\mathbb{N}$ and $n$
is an element of $\mathbb{N}$, we define the following numbers: $\tau (n,%
\mathbf{B}):=\#\left\{ b\in\mathbf{B}\;\mathrm{with}\;n>b\right\} $ and $%
\tau(\mathbf{A},\mathbf{B}):=\#\left\{ (a,b)\in\mathbf{A}\times \mathbf{B}%
\mid a>b\right\} $. Let $\mathcal{H}$ be a Hilbert space with ON basis $%
\left\{ e_{n}\mid n\in\mathbb{N}\right\} $. The creation/annihilation
operators $a_{n}^{\pm}=a^{\pm}(e_{n})$ are determined by their values on the
ON basis vectors $\left\{ e_{\mathbf{A}}\mid\mathbf{A}\subset\mathbb{N}%
\right\} $ of $\mathcal{A}(\mathcal{H})$:%
\begin{equation}
\begin{array}{lll}
a_{n}^{+}e_{\mathbf{A}}=0\;\mathrm{if}\;n\in\mathbf{A}, &  & a_{n}^{+}e_{%
\mathbf{A}}=(-1)^{\tau(n,\mathbf{A})}e_{\mathbf{A}\cup\left\{ n\right\} }\;%
\mathrm{if}\;n\notin\mathbf{A}, \\
a_{n}^{-}e_{\mathbf{A}}=(-1)^{\tau(n,\mathbf{A})}e_{\mathbf{A}\backslash
\left\{ n\right\} }\;\mathrm{if}\;n\in\mathbf{A}, &  & a_{n}^{-}e_{\mathbf{A}%
}=0\;\mathrm{if}\;n\notin\mathbf{A}.%
\end{array}
\label{c1}
\end{equation}

Taking the definitions of spaces and operators from Sec. \ref{dual} the
operator $\Theta\equiv T_{0}\left[ J\right] $ is defined as%
\begin{equation}
\Theta\,f_{\mathbf{K}}:=(-1)^{\tau(\mathbf{K},\mathbf{M})}e_{\mathbf{\bar{K}}%
},\;\mathbf{K}\subset\mathbf{M}.  \label{c2}
\end{equation}
Thereby $\mathbf{M}$ is the finite set $\mathbf{M}=\left\{ 1,...,n\right\} $%
, and $\mathbf{\bar{K}}:=\mathbf{M}\backslash\mathbf{K}$ is the complement
of $\mathbf{K\subset M}$. The ON basis systems $\left\{ e_{m},\,m\in \mathbf{%
M}\right\} \subset\mathcal{H}_{0}$ and $\left\{ f_{m},\,m\in \mathbf{M}%
\right\} \subset\mathcal{F}_{0}$ are related by the involution $\mathcal{F}%
_{0}\rightarrow\mathcal{F}_{0}^{\ast}$ and by the linear isometry $J:%
\mathcal{F}_{0}^{\ast}\rightarrow\mathcal{H}_{0}$%
\begin{equation}
e_{m}=-Jf_{m}^{\ast},\;m\in\mathbf{M}.  \label{c3}
\end{equation}
Using the relations (\ref{c1}) and (\ref{c2}) we obtain the identities%
\begin{align*}
\Theta a^{+}(f_{m})f_{\mathbf{K}} & =(-1)^{\tau(m,\mathbf{K})}\Theta f_{%
\mathbf{K\cup}\left\{ m\right\} }=(-1)^{\tau(m,\mathbf{K})+\tau (\mathbf{K+}%
\left\{ m\right\} ,\mathbf{M})}e_{\left( \mathbf{\bar{K}}\backslash\left\{
m\right\} \right) }\;\mathrm{if}\;m\in\mathbf{\bar{K}} \\
\Theta a^{+}(f_{m})f_{\mathbf{K}} & =0\;\mathrm{if}\;m\in\mathbf{K} \\
a^{-}(e_{m})\Theta f_{\mathbf{K}} & =(-1)^{\tau(\mathbf{K},\mathbf{M}%
)}a^{-}(e_{m})e_{\mathbf{\bar{K}}}=(-1)^{\tau(\mathbf{K},\mathbf{M})+\tau(m,%
\mathbf{\bar{K}})}e_{\left( \mathbf{\bar{K}}\backslash\left\{ m\right\}
\right) }\;\mathrm{if}\;m\in\mathbf{\bar{K}} \\
a^{-}(e_{m})\Theta f_{\mathbf{K}} & =(-1)^{\tau(\mathbf{K},\mathbf{M}%
)}a^{-}(e_{m})e_{\mathbf{\bar{K}}}=0\;\mathrm{if}\;m\in\mathbf{K}
\end{align*}
Then the relation $\tau(m,\mathbf{K})+\tau(\mathbf{K+}\left\{ m\right\} ,%
\mathbf{M})=\tau(\mathbf{K},\mathbf{M})+\tau(m,\mathbf{\bar{K}})\;mod\,2$
implies $\Theta a^{+}(f_{m})=a^{-}(e_{m})\Theta$ for all $m\in\mathbf{M.}$
The mapping (\ref{c2}) has therefore properties
\begin{equation}
\Theta a^{\pm}(f_{m})\Theta^{\dagger}=a^{\mp}(e_{m}),\;m\in\mathbf{M}.
\label{c4}
\end{equation}
For $h\in\mathcal{F}_{0}$ a basis expansion $h=\sum_{m\in\mathbf{M}}\gamma
_{m}f_{m},\,\gamma_{m}\in\mathbb{C}$, leads to $a^{+}(h)=\sum_{m}\gamma
_{m}a^{+}(f_{m})$ and $a^{-}(h)=\sum_{m}\bar{\gamma}_{m}a^{-}(f_{m})$.
Taking into account the relations (\ref{c3}) the identities (\ref{c4}) get
the more abstract form%
\begin{equation}
\Theta a^{\pm}(h)\Theta^{\dagger}=-a^{\mp}(J\,h^{\ast})\;\mathrm{if}\;h\in%
\mathcal{F}_{0}.
\end{equation}
These identities are equivalent to (\ref{r33}).

\newpage

\end{document}